\theoremstyle{definition}
\theoremstyle{plain}
\newtheorem{theorem}{Theorem}
\newtheorem{corollary}{Corollary}
\newtheorem{lemma}{Lemma}
\theoremstyle{remark}
\newtheorem{case}{Case}
\newcommand{\eq}[1]{Eq.\ (\ref{eq:#1})}
\newcommand{\Lc}{\mathcal{L}} % Process infinitesimal generator 
\newcommand{\ud}{\,\mathrm{d}}
\newcommand{\N}{{\mathbb{N}}}
\newcommand{\R}{{\mathbb{R}}}
\newcommand{\E}[2]{{\mathbb{E}_{#1} [#2]}}
\newcommand{\feta}{{\boldsymbol{\eta}}}
\newcommand{\rhobg}{\rho_{\textrm{bg}}}
\newcommand{\rhotrans}{\rho_{\textrm{trans}}}
\newcommand{\SLbg}{S_L^{\textrm{bg}}}
\newcommand{\fig}[1]{Fig.\ \ref{fig:#1}}
\newcommand{\sect}[1]{Section \ref{sec:#1}}
\title{Finite size effects and metastability in zero-range condensation}
\author{Paul Chleboun$^*$,\ Stefan Grosskinsky\footnote{Mathematics Institute and Centre for Complexity Science, University of Warwick, Coventry, CV4 7AL, UK }}
\date{\today}
\begin{document}
\maketitle

\begin{abstract}
We study zero-range processes which are known to exhibit a condensation transition, where above a critical density a non-zero fraction of all particles accumulates on a single lattice site. This phenomenon has been a subject of recent research interest and is well understood in the thermodynamic limit. The system shows large finite size effects, and we observe a switching between metastable fluid and condensed phases close to the critical point, in contrast to the continuous limiting behaviour of relevant observables. We describe the leading order finite size effects and establish a discontinuity near criticality in a rigorous scaling limit. We also characterise the metastable phases using a current matching argument and an extension of the fluid phase to supercritical densities. This constitutes an interesting example where the thermodynamic limit fails to capture essential parts of the dynamics, which are particularly relevant in applications with moderate system sizes such as traffic flow or granular clustering.
%Insert your abstract here. Include keywords, PACS and mathematical
%subject classification numbers as needed.

% \PACS{PACS code1 \and PACS code2 \and more}

\end{abstract}

keywords: zero range process; condensation; metastability; finite size effects; large deviations\\
%MSC classes: 82C22; 82C27; 60K35; 60F10

\section{Introduction}
\label{sec:intro}
Zero-range processes are stochastic particle systems with no restriction on the number of particles per site and with jump rates that depend only on the occupation of the departure site. 
This simple zero-range interaction leads to a product structure of the stationary distributions \cite{SpitzerInteraction,AndjelInvariant}. 
These processes have been a focus of recent research interest since they can exhibit a condensation transition. 
%For space homogeneous rates $g(n)$ this is the case under an asymptotic decay with the number of particles $n$. 
This is the case for space homogeneous jump rates $g(n)$ that decay asymptotically with the number of particles $n$.
A prototypical model with $g(n)=1+\frac{b}{n^\gamma}\quad\mbox{for }n=1,2,\ldots$ has been introduced in \cite{EvansPhase}, where condensation occurs for parameter values $\gamma \in (0,1)$, $b>0$ or $\gamma =1$, $b>2$. 
If the particle density $\rho$ exceeds a critical value $\rho_c$, the system phase separates into a homogeneous background with density $\rho_c$ and a condensate, where the excess particles accumulate on a single randomly located lattice site. 
%This thermodynamic limit result has also been established on a rigorous level in a series of papers \cite{JeonSize,GrosskinskyCondensation,ArmendarizThermodynamic,ArmendarizZero}. 
This transition has been established on a rigorous level in a series of papers, in the thermodynamic limit  \cite{JeonSize,GrosskinskyCondensation,ArmendarizThermodynamic,ArmendarizZero}, as well as on a finite system as the total number of particles diverges \cite{FerrariCondensation}.
Dynamic aspects of the transition such as equilibration and coarsening \cite{GodrecheDynamics,GrosskinskyCondensation} and the stationary dynamics of the condensate \cite{GodrecheDynamicsa} are well understood heuristically, for the latter first rigorous results have been achieved recently \cite{BeltranMetastability,BeltranMeta-stability}.
Findings for the zero-range process could be applied to understand condensation phenomena in a variety of nonequilibrium systems (see \cite{EvansNonequilibrium} and references therein), as well as providing a generic model of domain wall dynamics and a criterion for phase separation using a mapping to one-dimensional exclusion systems \cite{KafriCriterion}.
The process continues to be of interest, recent work on variations of the model includes mechanisms leading to more than one condensate \cite{SchwarzkopfZero-range,ThompsonZero-range,KimParticle}, or the effects of memory in the dynamics \cite{HirschbergCondensation}.
% applications and many variations of the original model (see also \cite{EvansNonequilibrium} and references therein).

While most of the results so far consider the thermodynamic limit, finite size effects in the model with jump rates $g(n)$ and $\gamma =1$ have been investigated in \cite{EvansCanonical,MajumdarNature} using saddle point methods, and in \cite{AngelCondensation} for a variant of the model with a single defect site. 
In the condensed phase region, finite systems are found to exhibit a large overshoot of the stationary current above its value in the thermodynamic limit.
In this paper we examine this phenomenon in detail for all possible values of the parameter $\gamma\in (0,1]$. 
We find the leading order finite size effects that describe the current overshoot by continuing the homogeneous (fluid) phase above the critical density and characterizing the condensed phase by a current matching argument. 
For $\gamma <1$, the main focus of this article, finite systems exhibit a metastable switching behaviour between the two phases, which is prevalent in Monte Carlo simulations for a wide range of parameters. 
To capture this phenomenon we examine the system in a scaling limit and derive a rate function, which exhibits a double well structure, describing the distribution over the bulk density (the bulk density serves as an order parameter to distinguish the fluid and condensed phases).
This way we rigorously establish the discontinuous behaviour on the critical scale, even though the bulk density is a continuous function in the thermodynamic limit. 
This is shown to be in agreement with recent results on the condensation transition at the critical density \cite{ArmendarizZero}. 
Based on the exact scaling limit, we can also predict the lifetime of the metastable phases for large finite systems by a heuristic random walk argument. 

In general, finite system size can lead to effective long-range interactions and non-convexity of thermodynamic potentials, as has been observed for various models (see for example \cite{HuellerFinite,BehringerContinuous} and references therein). For the zero-range process it has  been shown that for simple size-dependent jump rates metastability effects can be manifested even in the thermodynamic limit \cite{GrosskinskyDiscontinuous}, and large crossover effects in these systems have already been observed in \cite{EvansPhase}. 
The finite-size behaviour of the zero-range process considered here also exhibits the above non-convexity, with the additional feature of a sharp crossover between a putative fluid and condensed phase in the non-convex part. 
This leads to a metastable switching behaviour which disappears in the thermodynamic limit, intriguingly contradicting the usual expectation of finite systems to behave in a smoother fashion than the limiting prediction. 
The onset of phase coexistence at criticality is a classical question of general interest in phase separating systems, see for example \cite{BiskupFormation,BiskupCritical} for the formation of equilibrium droplets in the Ising model which also form suddenly on a critical scale. 
Rigorous results on metastability regarding the dynamics of the condensate in zero-range processes have also been a subject of recent research interest \cite{BeltranMetastability,BeltranMeta-stability,BeltranTunneling}, and our work provides a contribution in that direction and new insight in the mechanisms of condensate dynamics on finite systems.
This is explained in more detail in the discussion.
%Our results are analogous to studies of the formation of equilibrium droplets at the critical scale for systems with phase coexistence such as the Ising model in two or more space dimensions \cite{BiskupFormation,BiskupCritical}, where the droplet also forms discontinuously on a critical scale. 
%For the zero-range process the condensate covers only a single lattice site, so the `Wulff shape' of the droplet is particularly simple due to the factorization of the stationary distributions.

A proper understanding of the metastability phenomenon exhibited by zero-range processes on finite systems is also of particular importance for recent applications with moderate system sizes. 
Clustering phenomena in granular media can be described by zero-range processes (see \cite{ToeroekAnalytic,MeerCompartmentalized} and references therein), and metastable switching between homogeneous and condensed states has been observed experimentally \cite{WeeleHysteretic,MeerSudden}. 
In the spirit of the mapping introduced in \cite{KafriCriterion} the zero-range process with jump rates $g(n)$ has also been applied as a simplified traffic model \cite{KaupuzsZero-range,LevineTraffic}, where condensation corresponds to the occurance of a traffic jam. 
%This connection between condensation and jamming was first made in the context of the so-called bus route model \cite{O'LoanJamming}. 
A key feature of traffic models is the existence of a broad range of densities over which metastability between free flowing and jammed states is observed (see for example \cite{WilsonMechanisms} and references therein).
The relevance of this study in application serves as a motivation, but the aim of the paper is a general understanding of finite size effects and their implications for a generic class of zero-range processes, rather than a detailed analysis of particular cases.

The paper is organized as follows: 
In Section 2 we introduce the model and summarise previous results on stationary distributions and the thermodynamic limit. 
In Section 3 we present relevant observations on finite systems which we later study in detail. 
In Section 4 we give preliminary results and a heuristic description of the finite size effects. 
This motivates our main results presented in Section 5, where we rigorously establish metastability in a suitable scaling limit. 
In Section 6 we connect these results to the lifetimes of the metastable phases, and end with a short discussion in Section 7.

\section{Stationary Measures and the Thermodynamic Limit}
\subsection{The Zero Range Process}
We consider a one dimensional lattice of L sites $\Lambda_L = \{1,2,\ldots,L \}$ with periodic boundary conditions.
Let $\eta_x\in\N_0=\{0,1,2,\ldots\}$ be the number of particles on site $x\in\Lambda_L$.
The state of the system is described by $\feta=(\eta_x)_{x\in\Lambda_L}$ belonging to the state space of all particle configurations  $\Omega_L = \N_0^{\Lambda_L}$.
Particles jump on the lattice at a rate that depends only on the occupation number of the departure site.
A particle jumps off site $x$ after an exponential waiting time with rate $g(\eta_x)$ and moves to a target site $y$ according to the probability distribution $p(y-x)$.
We assume that $p$ is of finite range, i.e.\ $p(z) = 0$ if $|z|>R$ for some $R>0$, normalised and irreducible on $\Lambda_L$.
The main results of the paper focus on the jump rates $g:\N_0\to\R_{+}$ of the form
\begin{align}
  \label{eq:rates}
g(n) = \left\{
    \begin{array}{ll}
      1 + \frac{b}{n^\gamma} &\textrm{if $n>0$} \\
      0 &\textrm{if $n = 0$}
    \end{array}
    \right.    
\end{align}
with $\gamma\in(0,1)$ and $b>0$. A discussion on extending these results to the case $\gamma=1$ and $b\geq 3$ and other lattice geometries can be found in section \ref{sec:discuss}. 
These jump rates were first introduced by Evans \cite{EvansPhase} and represent a fairly general class of functions of interest for the condensation transition.
%(\note{This paper possibly only discusses the $\gamma=1$ case. Maybe could describe why we use these rates (i.e. generic decay etc.)}).

% Let $C(\Omega_L,\R)$ be the set of continuous functions with respect to the uniform norm.
The infinitesimal generator of the process acting on suitable test functions $f$ is given by
\begin{align}
  \label{eq:infgen}
  (\Lc f)(\feta) = \sum_{x,y\in\Lambda_L}g(\eta_x)p(y-x)\left(f(\feta^{x,y})-f(\feta)\right)\ ,
\end{align}
where $\eta^{x,y}_z = \eta_z - \delta(z,x) + \delta(z,y)$ and $\delta$ is the Kronecker delta \cite{AndjelInvariant,LiggettErgodic}. 
The process conserves the total number of particles in the system, so 
$\Omega_L$ can be partitioned into invariant subsets $\Omega_{L,N} = \left\{\feta\in\Omega_L | \sum_{x\in\Lambda_L}\eta_x = N\right\}$
on which the zero-range process is a finite state irreducible Markov process.
The process can also be defined on an infinite lattice under certain constraints, for details see \cite{HolleyClass,AndjelInvariant}.

\subsection{Stationary Measures}
\label{subsec:statmeasure}
The following summarises well known results on stationary measures of the zero-range process, for details see \cite{SpitzerInteraction,EvansPhase,AndjelInvariant}. 
The zero-range process with generator (\ref{eq:infgen}) has a family of stationary homogeneous product measures on $\Omega_{L}$ which we refer to as the \textbf{grand canonical ensemble}. 
These measures are parameterized by a fugacity $\phi$ and are of the form,
\begin{align}
  \nu_{\phi}^{L}\left[\feta\right]=\prod_{x\in\Lambda_L}\nu_{\phi}\left[\eta_x\right] \quad \textrm{where} \quad \nu_\phi\left[n\right] = \frac{1}{z(\phi)}w(n)\phi^{n} .
\end{align}
These exist for all $\phi \in [0,\phi_c)$ where $\phi_c$ is the radius of convergence of the (single site) partition function
\begin{align}
  z(\phi) = \sum_{k=0}^{\infty}w(k)\phi^k .
\end{align}
The stationary weights $w$ are given by $w(0)=1$ and
\begin{align}
\label{eq:wn}
  w(n) = \prod_{k=1}^{n}g(k)^{-1}, \quad n>0 .
\end{align}

In the grand canonical ensemble the expected particle density is a function of $\phi$ and is given by,
\begin{align}
  R(\phi) := \E{\nu_{\phi}}{\eta_1} = \sum_{k=0}^{\infty}k\nu_\phi(k) = \phi\,\partial_\phi\log z(\phi)\ ,
\end{align}
which is strictly increasing and $R(0)=0$. 
The critical density is defined by $\rho_c = \lim\limits_{\phi\nearrow\phi_c}R(\phi)\in(0,\infty]$, and condensation occurs if $\rho_c < \infty$, as explained below.

The expected jump rate off a site is proportional to the average stationary current or, in case the first moment $\sum_z z\, p(z)$ vanishes, to the diffusivity. 
%is non-zero the system exhibits a non-zero stationary current, or a diffusivity, 
%The average stationary current on the lattice $\Lambda_L$ is proportional to the average jump rate off a site. 
%For simplicity we will therefore refer to \textbf{current} for the average jump rate off a site in the rest of the paper, which is clearly site independent under a stationary distribution. 
Therefore for simplicity, in the rest of this paper, \textbf{current} will refer to the average jump rate off a site, which is clearly site independent under a stationary distribution.
In the grand canonical ensemble the current is simply given by the fugacity $\phi$, 
\begin{align}
\label{eq:GCcur}
  j_\phi := \E{\nu_{\phi}}{ g(\eta_x)} &= \frac{1}{z(\phi)}\sum_{n=0}^{\infty}g(n) w(n) \phi^{n}=\phi\ ,
\end{align}
which follows directly from the form of the stationary weights $w(n)$ \eq{wn}.

For fixed $L$ and $N$ the process restricted to $\Omega_{L,N}$ is ergodic, the corresponding unique stationary measures belong to the \textbf{canonical ensemble} and are given by
\begin{align}
  \pi_{L,N} \left[\feta \right]:=\nu_\phi\left[\feta | S_L(\feta) = N\right] \quad \textrm{where} \quad S_L(\feta) = \sum_{x\in \Lambda_L} \eta_x.
\end{align}
These measures are independent of $\phi$ and are given explicitly by
\begin{align*}
  \pi_{L,N} \left[\feta \right]=\frac{1}{Z(L,N)}\prod_{x\in\Lambda_L}w(\eta_x)\delta\big(\sum_x \eta_x , N\big)\ ,
\end{align*}
where the canonical partition function is
\begin{align}
  Z(L,N) = \sum_{\feta\in\Omega_{L,N}}\prod_{x\in\Lambda_L}w(\eta_x)\ .
\end{align}

In the canonical ensemble the form of the stationary weights, \eq{wn}, imply that the average current is given by a ratio of partition functions,
\begin{align}
  \label{eq:CanCur}
  j_{L,N}:=\E{\pi_{L,N}}{ g(\eta_x) } = \frac{Z(L,N-1)}{Z(L,N)}.
\end{align}

For the jump rates (\ref{eq:rates}) the single site weights asymptotically decay as a stretched exponential
\begin{align}
  \label{eq:AsymWk}
  w(n)\sim e^{-\frac{b}{1-\gamma}n^{1-\gamma}}  \quad \textrm{for} \quad \gamma \in (0,1).
\end{align}
Throughout the paper we use `$\sim$' to mean asymptotically proportional and `$\simeq$' to mean asymptotically equal.
It follows that $\phi_c=1$ and the critical density and the variance are finite, 
\begin{align*}
  \rho_c := R(1) < \infty\ ,\quad  \sigma_c^2 :=\E{\nu_{\phi_c}}{\eta_x^2} - \rho_c^2 < \infty\ .
\end{align*}
%The variance at the critical point is also finite
%\begin{align*}
%  \sigma_c^2 :=\E{\nu_{\phi_c}}{\eta_x^2} - \rho_c^2 < \infty.
%\end{align*}
$R$ is strictly increasing, so invertible on $[0,\phi_c]=[0,1]$ and we denote its inverse by
\begin{align}
  \tilde{\Phi}(\rho)=R^{-1}(\rho).
\end{align}
In this way we can parameterise the grand canonical measures by densities $\rho\in [0,\rho_c]$.

\subsection{Thermodynamic Limit}
\label{sec:thermo}
In this section we summarise known results on the condensation transition in the zero-range process.
It has been established  in \cite{EvansPhase} and rigorously in \cite{GrosskinskyCondensation,GrosskinskyEquivalence} as a continuous phase transition in the thermodynamic limit,
as particle number $N$ and lattice size $L$ tend to infinity such that $N/L\to\rho$.
It was shown that all finite dimensional marginals of the canonical measure $\pi_{N,L}$ converge to the grand canonical measure with density $\rho$ if $\rho\leq \rho_c$.
If $\rho>\rho_c$ then there is no grand canonical measure with density $\rho$ and all finite dimensional marginals of $\pi_{L,N}$ converge to the grand canonical measure with density $\rho_c$.
This result on the equivalence of ensembles holds in terms of weak convergence (for details see \cite{GrosskinskyEquivalence}) and can be summarised as
\begin{align}
  \label{eq:weak}
  \pi_{L,N} \to \nu_{\Phi(\rho)} \quad  \textrm{as} \ L\to\infty \quad \textrm{and} \quad N/L\to\rho\ ,
\end{align}
where
\begin{align}
  \label{eq:phi}
  \Phi(\rho) = \left\{ \begin{array}{ll}
       \tilde{ \Phi}(\rho) & \textrm{ if $\rho<\rho_c$}\\
        \phi_c & \textrm{ if $\rho \geq \rho_c$ }
      \end{array}     \right. .
\end{align}
This implies that for $\rho>\rho_c$ the excess particles in the system condense on a set of vanishing volume fraction. 
The result has been strengthened in \cite{ArmendarizThermodynamic,ArmendarizZero} (with partial results already in \cite{JeonSize}), showing that the condensate typically resides on a single lattice site. 
%A similar results has been established for fixed $L$ as $N\to\infty$, 
%where the condensate resides on a single site and the distribution of the background converges weakly to the critical measure $\nu_{\phi_c}^{L-1}$ \cite{FerrariCondensation}.
%This result has been strengthened for $\gamma\in (1/2,1]$ when $\rho > \rho_c$ to global convergence  of $\pi_{L,N}$ to $\nu_{\phi_c}$ on all but the maximally occupied site. Also for $\gamma\in (0,1]$ it has been shown that in the thermodynamic limit the condensate typically resides on a single lattice site \cite{ArmendarizThermodynamic,ArmendarizZero}.
%%(\note{technically not true for $\gamma\in(0,1/2)$}).
Denoting the size of the maximum component $M_L (\feta) = \max\limits_{x\in \Lambda_L} \eta_x$ this can be written as
%, the result discussed above can be expressed by  
\begin{align}
\frac{1}{L} M_L \xrightarrow{\pi_{L,N}} \left\{ \begin{array}{ll}
       0 & \textrm{ if $\rho\leq\rho_c$}\\
        \rho - \rho_c & \textrm{ if $\rho > \rho_c$ }
      \end{array}     \right. .
\end{align}
Here the notation denotes converges in probability with respect to $\pi_{L,N}$,
\begin{align*}
  \pi_{L,N}\left[\left|\frac{1}{L} M_L - (\rho-\rho_c)\right|>\epsilon\right] \to 0 \quad \textrm{for all} \quad \epsilon>0.
\end{align*}
It will often be useful to express this result also in terms of the behaviour of the bulk of the system. 
We denote the total number of particles outside of the maximally occupied site by $\SLbg(\feta) = N - M_L(\feta)$.
%The background density is then given by $\SLbg(\feta)/ (L-1)$.
%The above result can be restated in terms of 
The background density converges as
\begin{align}
\label{eq:Thermobg}
\frac{1}{L-1}\SLbg(\feta) \xrightarrow{\pi_{L,N}} \left\{ \begin{array}{ll}
       \rho & \textrm{ if $\rho\leq \rho_c$}\\
        \rho_c & \textrm{ if $\rho >\rho_c$ }
      \end{array}     \right. .
\end{align}
A similar result holds for the behaviour of the canonical current in the thermodynamic limit.
With \eq{GCcur} the convergence in (\ref{eq:weak}), (\ref{eq:phi}) implies that
\begin{align}
  j_{L,N}\to \Phi(\rho) \quad \textrm{as }L\to\infty \quad \textrm{and} \quad N/L\to \rho\ .
\end{align}
So the current and the background density are both continuous with respect to the total system density in the thermodynamic limit. 
Also both are strictly increasing up to $\rho_c$ and constant for $\rho>\rho_c$. 
If $N/L\to\rho\leq\rho_c$ the system is said to be in the \textbf{fluid} phase region and if $N/L\to\rho > \rho_c$ the system is in the \textbf{condensed} phase region.

The thermodynamic entropy is defined by the Legendre transform
\begin{align}
  \label{eq:thermoEnt}
    s(\rho) &= \sup\limits_{\phi\in[0,\phi_c)}\left( \log z (\phi) - \rho \log \phi \right)\nonumber \\
    &= \log z (\Phi(\rho)) - \rho \log \Phi(\rho) \ ,
\end{align}
where $\Phi(\rho)$ is given by \eq{phi}.
It has been shown in \cite{GrosskinskyCondensation} that the canonical partition function converges to the thermodynamic entropy, i.e.
% as $L\to\infty$ and $N/L\to\rho$,
\begin{align}
    \frac{1}{L}\log Z(L,N) \to s(\rho) \quad \textrm{as }L\to\infty \quad \textrm{and} \quad N/L\to \rho\ .
\end{align}

\section{Observations on a Finite Systems}

\begin{figure}[t]
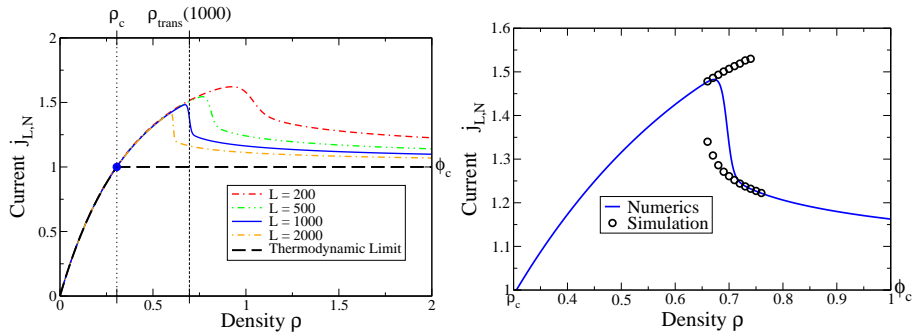

  \centering
  \includegraphics[width=0.48\textwidth]{numeric_curr_s05b4-Writeup}
  \includegraphics[width=0.49\textwidth]{MC_writeup}
  \caption{\label{fig:cur}  Finite size effects and current overshoot for $\gamma=0.5$ and $b=4$. Left: The canonical current for various system sizes as a function of the density $\rho=N/L$ are plotted. The dashed black line shows the thermodynamic current as a function of the system density $\rho$. 
%  The current overshoot beyond the thermodynamic limit critical value is pronounced even for $L=2000$. The transition at $\rhotrans(L)$ from a putative fluid to condensed phase is seen to become more abrupt as $L$ increases. 
  Right: The overshoot region for $L=1000$ showing the two distinct currents measured from Monte Carlo simulations for various densities near the maximum current.}
\end{figure}

\noindent In this section we present results obtained from exact numerics and Monte Carlo simulations in the canonical ensemble. 
We can calculate the canonical current given by \eq{CanCur} by making use of the following recursion relation for the canonical partition functions,
\begin{align}
  Z(L,N)=\sum_{k=0}^{N}w(k)Z(L-1,N-k).
\end{align}
Similarly we can calculate the canonical distribution of the maximum site occupation $M_L$.
To this end we define the cut-off canonical partition function which counts configurations for which $M_L(\feta)\leq m$,
\begin{align}
  Q(L,N,m)=\sum_{k=0}^{\min\{m, N\}}w(k)Q(L-1,N-k,m)\ .
\end{align}
%then
This allows us to calculate
\begin{align}
  \label{eq:CanMax}
	\pi_{L,N}\left[M_L= m\right]=\frac{Q(L,N,m)-Q(L,N,m-1)}{Z(L,N)}\ ,
%  \pi_{L,N}\left[M_L(\feta)\leq m\right] = \frac{Z(L,N,m)}{Z(L,N)}.
\end{align}
for all $m\in\N$ (the case $m=0$ is trivial). 
%This allows us to calculate the canonical distribution of the maximum site occupation $\pi_{L,N}\left[M_L= m\right]=\pi_{L,N}\left[M_L\leq m\right]-\pi_{L,N}\left[M_L< m\right]$.
We often consider the equivalent formulation using background densities
\begin{align}
  \pi_{L,N}\left[M_L =  m\right] = \pi_{L,N}\left[ \frac{\SLbg}{L-1} = \rhobg \right] 
   \quad \textrm{ where } \quad \rhobg = \frac{N-m}{L-1}\ . \nonumber
\end{align}
This is more useful for illustrations and is more intuitive, since the background density characterises all but a single lattice site, while the formulation involving $M_L$ is more convenient for computations since it avoids issues with non-integer numbers. 
Therefore keeping both formulations in parallel is the best option for a concise presentation of our results.

\begin{figure}[t]
  \centering
  \includegraphics[width=0.48\textwidth]{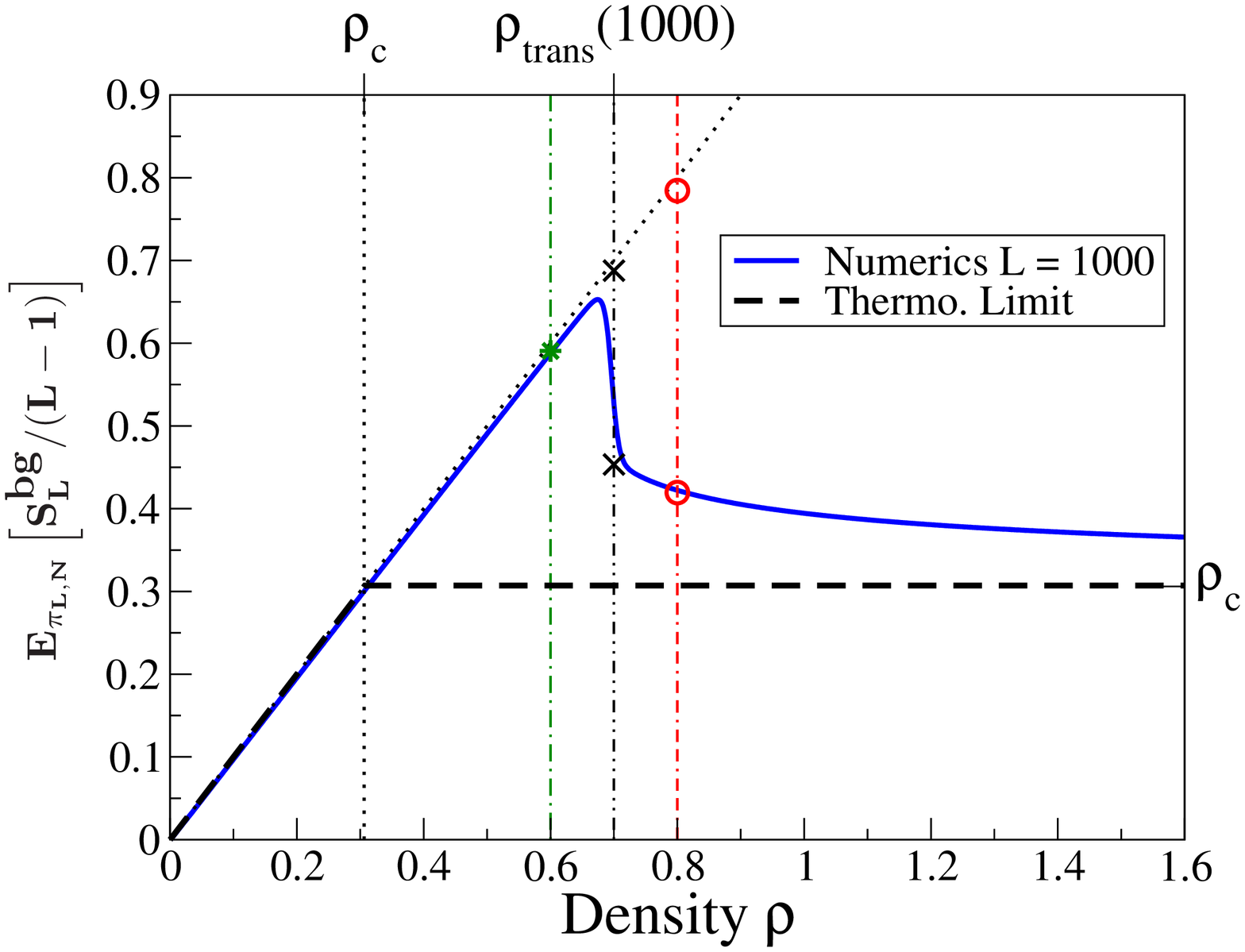}
  \includegraphics[width=0.49\textwidth]{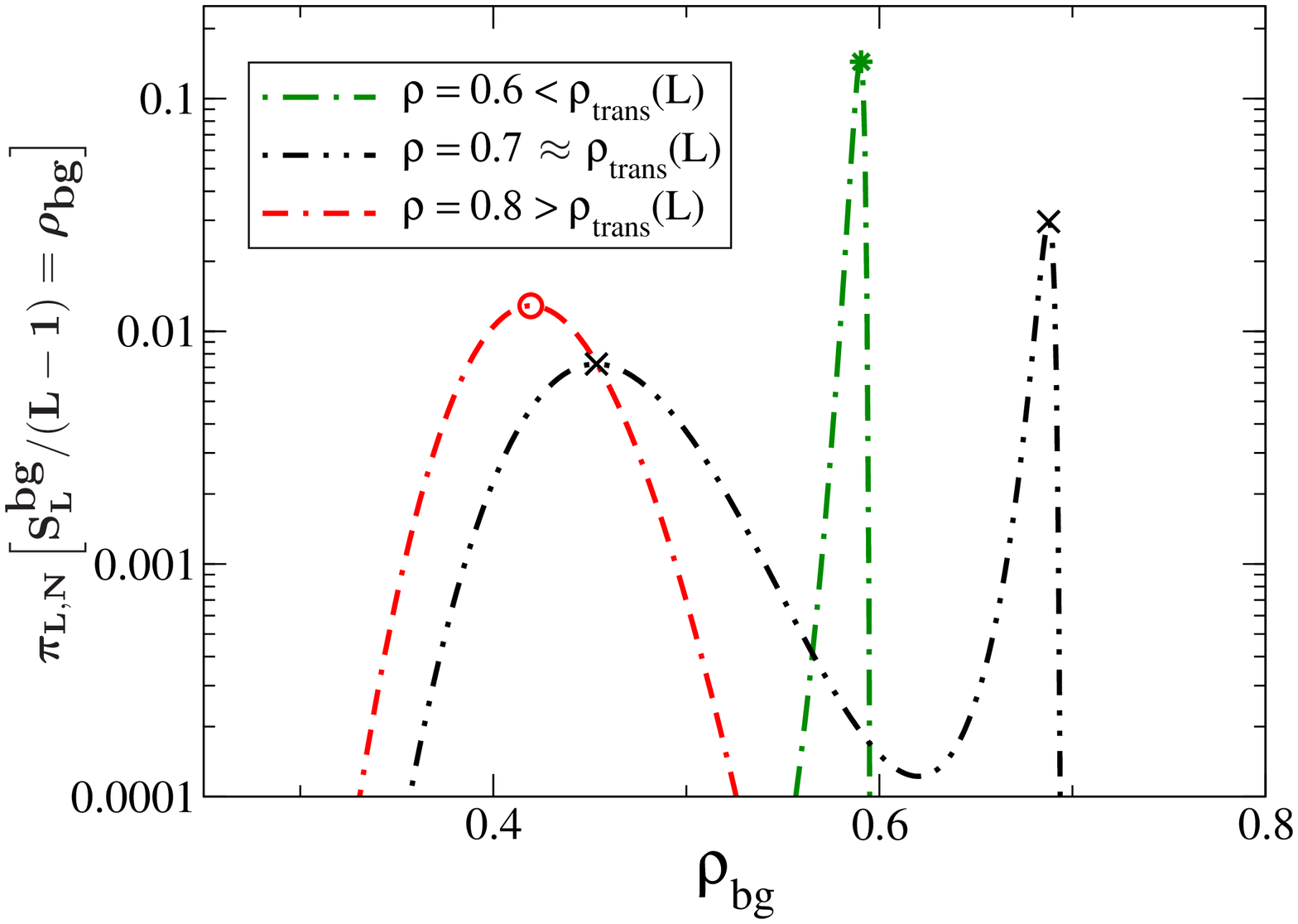}
  \caption{\label{fig:background} Finite size effects for the background density for $\gamma=0.5$, $b=4$ and $L=1000$. Left: The expected value of the background density. The dashed black line shows the thermodynamic limit result. Right: Distribution of the background density at three system densities shown by corresponding dash-dotted lines on the left, calculated exactly using \eq{CanMax}. The position of local maxima of the distributions are marked on both plots (\textcolor{green}{$\ast$} for $\rho=0.6$, $\times$ for $\rho=0.7$, \textcolor{red}{$\circ$} for $\rho=0.8$). The high background density maximum at $\rho=0.8$ occurs with extremely low probability and is off the scale.}
\end{figure}

On large finite systems we observe significant finite size effects above the critical density $\rho_c$, \fig{cur} shows the typical behaviour of the canonical current. % on large but finite systems.
Below $\rho_c$ the current is very close to the thermodynamic limit result even on relatively small systems ($L\sim100$) and the leading order finite size effects can be understood immediately from the proof of the thermodynamic limit result \cite{GrosskinskyCondensation}.
However the canonical current and background density significantly overshoot their critical values see \fig{cur} and \fig{background}.
An overshoot has been observed before for systems containing a single site defect \cite{AngelCondensation} for $\gamma = 1$.
For $\gamma<1$ the current increases monotonically with density $\rho = N/L$, in a way that appears to vary only very slightly with system size, up to some size dependent maximum current at $\rhotrans(L)$.
For $\rho < \rhotrans(L)$ the background density in the system is typically very close to $\rho$ (\fig{background}).
For a fixed system size $L$ we associate the region $\rho<\rhotrans(L)$ with a \textbf{putative fluid phase}.
Close to $\rho = \rhotrans(L)$ there is an abrupt decrease in the current and background density.
For $\rho > \rhotrans(L)$ the current and background density are both decreasing in $\rho$ and tend to their respective critical value as $\rho\to\infty$.
We associate the region $\rho > \rhotrans(L)$ with a \textbf{putative condensed phase} since the increasing density is entirely taken up by a large number of particles condensing on a single lattice site. 
The size of the effective fluid overshoot increases with increasing $b$ and with decreasing $\gamma$, and is already very pronounced  at $\gamma =0.5$, $b=4$.
%The range and height of the fluid overshoot increase with increasing $b$ and with decreasing $\gamma$, and are very pronounced already at $\gamma =0.5$. 
% This isn't true since as b increases the critical density decreases so the range of the overshoot in densities decreases
%The extent of the fluid overshoot increases as $\gamma$ decreases and is very pronounced already at $\gamma=0.5$.
%As $b$ increases the height of the fluid overshoot current increases as does the size of the jump in current around $\rhotrans$.

\begin{figure}[t]
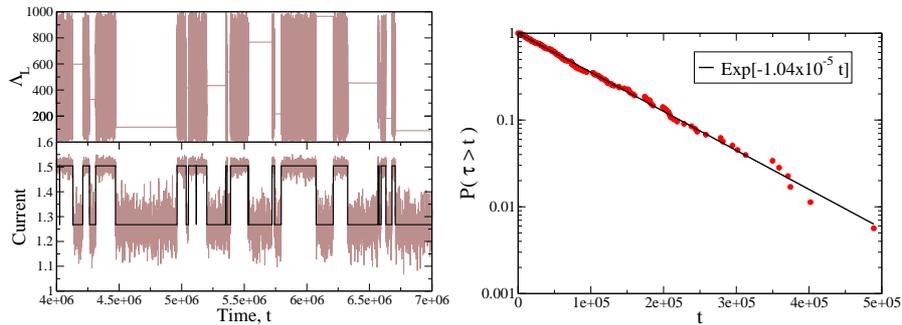

  \centering
  \includegraphics[width=0.48\textwidth]{switching_and_max-Writeup}
  \includegraphics[width=0.48\textwidth]{dist_fluid_s05b4-Writeup}
  \caption{\label{fig:switch} Switching dynamics for $\gamma = 0.5$, $b=4$ and $L=1000$ taken at $N/L=\rhotrans(1000)=0.695$. Left: (Bottom) Current against time from Monte Carlo simulations, calculated by taking the average number of jumps in the system over small time windows. The black line indicates the transition between the two putative phases. (Top) The location of the maximum does not change until the system is fluid.
% the condensate does not move until the system is fluid. 
Right: Cumulative tail of the distribution of waiting times in the putative fluid phase on a log linear scale. The solid line shows an exponential fit.}
\end{figure}

For $\rho=N/L$ close to $\rhotrans(L)$ we observe that the canonical distribution over background densities has two maxima of similar magnitude (\fig{background}) and Monte Carlo simulations show that the system switches between the two putative phases. %(\fig{switch}). 
%Close to $\rhotrans(L)$ the current in the system spends a significant amount of time in a high current state associated with the putative fluid phase in which the background density is close to $\rho>\rho_c$.
%Over a much shorter time scale the current switches to a low current state associated with the putative condensed phase and remains in this state for a significant amount of time before switching back.
%In the putative condensed phase the background density is less than $\rho$ but greater than $\rho_c$.
\fig{switch} shows the typical behaviour of the current close to $\rhotrans(L)$ as a function of time, the putative phases can be clearly distinguished by the current. 
For $\rho\approx\rhotrans(L)$ in the condensed phase the location of the condensate does not change, while its position fluctuates heavily in the fluid phase, supporting the fact that the particles are distributed homogeneously.

The empirical distribution of waiting times in the two putative phases is very close to an exponential (\fig{switch} right). 
This suggests that the switching process is approximately Markovian over the range of parameters and jump distributions observed, and constitutes a genuine metastability phenomenon. 
%For systems in which the lifetime of the two putative phases is similar the site on which the excess particles concentrate during the condensed phase remains the same until the system undergoes the transition back to a fluid phase.
The rate of the switching depends on the parameters $b$ and $\gamma$ as well as the jump distribution $p(x)$, which will be discussed in Section \ref{sec:dynamics} in more detail.
%as will be discussion of the dynamics  is given 

\section{Preliminary results}
\label{sec:heur}

\subsection{Heuristics and Current Matching}
As suggested by the form of the current overshoot in Fig 1, our approach is to approximate the putative fluid phase by extending the grand canonical current above $\phi_c$, which we achieve by means of a cut-off grand canonical measure. %(equivalently $\rhobg$ above $\rho_c$) not so good since rhobg is a number 
Since the total number of particles is fixed canonically to $N$ the system can not explore states where any single site contains more than $N$ particles.
We therefore expect the distribution of particles in the background under the canonical measure will always be closer (in any reasonable sense) to a grand canonical measure, with some suitably chosen cut-off and fugacitiy chosen to fix the correct density of particles, than it is to the unconditioned distribution.
For cut-off $m\in\N$ the \textbf{cut-off grand canonical measures} are defined by single site marginals with support on $\{0,1,\ldots,m\}$,
\begin{align}
\label{eq:CutMeas}
  \nu_{\phi,m}\left[\eta_x=n\right]&:=\nu_{\phi}\left[\eta_x= n|\eta_x\leq m \right] \nonumber\\
  &=\frac{1}{z_{m}(\phi)}w(n)\phi^{n} \quad \textrm{for} \quad n\leq m
\end{align}
where the normalisation is given by the finite sum
\begin{align}
  z_{m}(\phi)=\sum_{k=0}^{m}w(k)\phi^k.
\end{align}
These measures are well defined for all $\phi \in (0,\infty)$ and the current is given by
\begin{align}
  \label{eq:CutCur}
  \E{\nu_{\phi,m}}{g(\eta_x)} &= \phi\left( 1-\frac{w(m)\phi^{m}}{z_{m}(\phi)}\right).
\end{align}
Also the average density in the cut-off ensemble
\begin{align}
  \label{eq:CutDens}
  R_{m}(\phi):=\E{\nu_{\phi,m}}{\eta_x}=\frac{1}{z_{m}(\phi)}\sum_{k=0}^{m}k w(k) 
\end{align}
is a strictly increasing function from $[0,\infty)$ onto $[0,\infty)$ and so we denote its inverse
\begin{align}
  \label{eq:phicut}
  \Phi_{m}(\rhobg):=R_{m}^{-1}(\rhobg).
\end{align}
To a first approximation we estimate the current in the putative fluid phase using \eq{CutCur},
\begin{align}
  \label{eq:FluidCur}
  j_{L,N} \approx \Phi_N(\rho) 
\end{align}
under the assumption that $\left(1-w(N)\phi^{N}/z_{N}(\phi)\right)\approx 1$ for  values of $N$ under consideration. This approximation is shown in \fig{approxs} by the blue dashed line for $\gamma = 0.5 $, $b=4$ and a system size of $L=1000$.
The approximation is extremely close to the canonical current for $\rho<\rhotrans(L)$ and for $\rho\approx\rhotrans(L)$ is in good agreement with the empirically measured fluid currents from Monte Carlo simulations.
We demonstrate in the Section \ref{sec:rigour} that this estimate can be improved by choosing the cut-off more carefully.

\begin{figure}[t]
  \centering
  \includegraphics[width=0.7\textwidth]{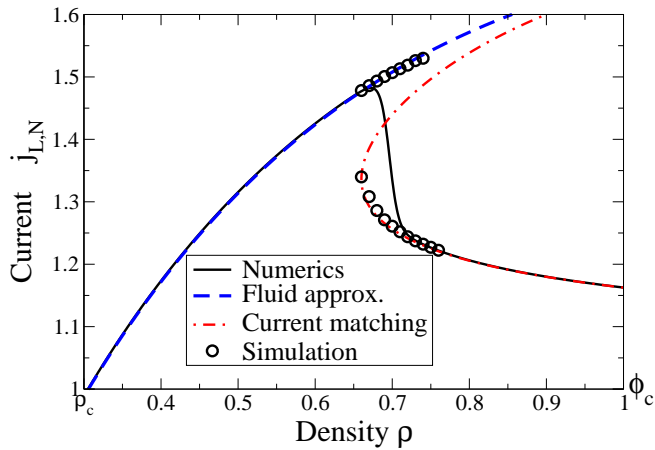}
  \caption{\label{fig:approxs} Current estimates above the critical point $\rho>\rho_c$ and $j_{L,N}>\phi_c$, for $L=1000$, $\gamma=0.5$ and $b=4$. The fluid approximation (\ref{eq:FluidCur}) and the current matching (\ref{eq:CurMatch}) agree well with canonical numerics and the metastable branches from Monte Carlo simulations (cf. Figs. \ref{fig:cur} and \ref{fig:switch}).}
\end{figure}

We approximate the current in the putative condensed phase by a current matching argument, a similar heuristic argument has been given before in \cite{KaupuzsZero-range}. 
The existence of a stable condensate implies that the average rate of particles exiting the condensate must be equal to the average rate of particles entering it. 
Conditioned on the occupation of the condensate $M_L =m$, the exit rate is simply $g(m)$ while the entry rate is well approximated by the stationary current in the background. 
This is assumed to be in a putative fluid phase described by $\nu_{\phi ,m}$, which leads to the current matching condition.
%The rate of particles leaving a site containing on average a significant fraction of the total number of particles in the system is given by the jump rate function $g$.
%The rate of particles entering the condensate is approximately the current in the background which we assume is in a putative fluid phase as described above.
%For a fixed background density $\SLbg/(L-1)=\rhobg$ corresponding to $M_L=m$ where $\rhobg (L-1) = N-m$ we interpret $\Phi_m\left(\rhobg\right)$ as the average current in the background and $g(m)$ as the current out of the condensate.
%The current matching argument is given explicitly by
\begin{align}
  \label{eq:CurMatch}
  \Phi_m\left(\rhobg\right) = g(m) \quad \textrm{where}\quad \rhobg=\frac{N-m}{L-1}\ .
\end{align}
%Matching these two currents gives an implicit definition of the current in the condensed phase, this is illustrated in \fig{approxs}.
The lower branch of solutions to this equation define the condensed current approximation (\fig{approxs} red dashed line).
These are extremely close to the canonical current for $\rho>\rhotrans(L)$ and for $\rho\approx\rhotrans(L)$ are in good agreement with the empirically measured condensed currents from Monte Carlo simulations.

\subsection{Thermodynamic Limit  Rate Function}
In the thermodynamic limit we expect the existence of a rate function $I_{\rho}$ that describes the asymptotic probability of observing a background density $\rhobg\in (0,\rho)$. 
Precisely, we will consider the limit
\begin{align}
  \label{eq:mlimit}
  N,L,m\to\infty\quad\mbox{such that}\quad N/L\to\rho \ \textrm{and }\ \frac{m}{L}\to\rho - \rhobg\ , 
\end{align}
and expect that
\begin{align}
  \pi_{L,N}\left[ M_L = m \right]\sim e^{-L I_{\rho} (\rhobg)} \ .%\textrm{as} \ N/L\to\rho \ \textrm{and }\ \frac{m}{L}\to\rho - \rhobg. 
\end{align}
To derive this we define the following function for finite systems
\begin{align}
  \label{eq:IDef}
  I_{L,N}(m)=-\frac{1}{L}\log \pi_{L,N}\left[ M_L = m \right]\ ,
\end{align}
which can be written as%and make use of the expression 
\begin{align}
  \label{eq:IL}
  I_{L,N}(m) =&  -\frac{(L-1)}{L}\Big(\log z_{m}(\phi)- \frac{N-m}{L-1}\log \phi \Big)+ \nonumber\\
  & -\frac{1}{L}\log w(m)  +\frac{1}{L}\log Z(L,N)- \frac{1}{L}\log L+ \nonumber\\
  &  - \frac{1}{L}\log \nu_{\phi,m}^{L-1}\left[S_{L-1}{=}N-m \right]  {+} O(e^{-\frac{-b}{1-\gamma}m^{1-\gamma}})\ .%\nonumber
  %&\textrm{where} \quad m = N - [\rhobg (L-1)].
\end{align}
This is derived in Appendix \ref{app:deriv} and is valid for any $\phi\in(0,\infty)$. 
%We are free to choose any positive $\phi$ since the cut-off distribution exists for .
The first line in \eq{IL} resembles the thermodynamic entropy of the background (cf. \eq{thermoEnt}), the first term on the second line is the contribution due to the maximum occupied site.
%The second term on the second line is independent of the background density (maximum site occupation) and ensures that the minimum is at attained at zero.
The second term on the second line is the canonical normalisation and is independent of $m$. 
All the other terms will vanish in the limit (\ref{eq:mlimit}) and we get the following result.%the background density (maximum site ocupation).

\begin{theorem}
\label{thm:first}
  In the limit (\ref{eq:mlimit}) %as $L\to \infty$, $N/L\to\rho$ and $\frac{m}{L}\to \rho - \rhobg\in(0,\rho)$, 
  $I_{L,N}(\rhobg)$ converges for all $\rhobg\in [0,\rho )$ as
  \begin{align}
    I_\rho(\rhobg):=\lim\limits_{\substack{L\to\infty}}I_{L,N}(m)=s(\rho) - s(\rhobg)\ ,
  \end{align}
  where $s$ is the thermodynamic entropy given by \eq{thermoEnt}.
  So for $\rho >\rho_c$ (supercritical case), $I_\rho(\rhobg)=0$ for each $\rhobg\geq \rho_c$ (see \fig{thermo}).
\end{theorem}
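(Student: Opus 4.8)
The plan is to analyze each term in the expansion \eq{IL} in the limit (\ref{eq:mlimit}) and show that all but two of them vanish, leaving precisely $s(\rho) - s(\rhobg)$. The starting point is the identity \eq{IL} for $I_{L,N}(m)$, which holds for any $\phi \in (0,\infty)$; since we are free to choose $\phi$, the natural choice is $\phi = \Phi_{m}(\rhobg)$, the cut-off fugacity \eq{phicut} that fixes the background density, or more simply $\phi = \Phi(\rhobg)$ from \eq{phi} when $\rhobg < \rho_c$, exploiting that the cut-off ensemble $\nu_{\phi,m}$ converges to $\nu_\phi$ as $m \to \infty$ whenever $\phi \le \phi_c$. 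With this choice, the first line of \eq{IL}, namely $-\frac{L-1}{L}\big(\log z_m(\phi) - \rhobg \log\phi\big)$, converges to $-(\log z(\Phi(\rhobg)) - \rhobg\log\Phi(\rhobg)) = -s(\rhobg)$ by \eq{thermoEnt}, using $z_m(\phi) \to z(\phi)$ and $\frac{N-m}{L-1} \to \rhobg$.

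Next I would dispatch the remaining terms. The term $\frac{1}{L}\log Z(L,N)$ converges to $s(\rho)$ directly by the partition-function asymptotics stated at the end of Section \ref{sec:thermo}. The terms $-\frac{1}{L}\log L$ and $O(e^{-\frac{b}{1-\gamma}m^{1-\gamma}})$ are manifestly negligible (the latter because $m \to \infty$). For the maximum-site contribution $-\frac{1}{L}\log w(m)$: from the stretched-exponential asymptotics \eq{AsymWk}, $\log w(m) \simeq -\frac{b}{1-\gamma} m^{1-\gamma}$, so $-\frac{1}{L}\log w(m) \simeq \frac{b}{1-\gamma}\frac{m^{1-\gamma}}{L} = \frac{b}{1-\gamma} L^{-\gamma}(m/L)^{1-\gamma} \to 0$ since $\gamma > 0$ and $m/L \to \rho - \rhobg$ stays bounded. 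This is the step where it matters that $\gamma \in (0,1)$ strictly: the subexponential decay of $w$ is exactly what makes the condensate cost sublinear in $L$.

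The main obstacle is the local central limit term $-\frac{1}{L}\log \nu_{\phi,m}^{L-1}[S_{L-1} = N - m]$. One must show this is $o(L)$, i.e. that the probability under the $(L-1)$-fold product of cut-off marginals of hitting the exact value $N-m$ decays at most polynomially (or at worst subexponentially) in $L$. With $\phi$ chosen so that $R_m(\phi) = \rhobg = \frac{N-m}{L-1}$, the value $N-m$ is the mean of $S_{L-1}$, so this is a genuine local CLT estimate: provided the cut-off single-site variance is bounded away from $0$ and $\infty$ (which holds uniformly for $m$ large and $\phi$ in the relevant range, since $\nu_{\phi,m} \to \nu_\phi$ with finite nondegenerate variance $\sigma_c^2$ when $\rhobg \le \rho_c$, and is handled separately for $\rhobg$ near $\rho$), a standard local limit theorem for triangular arrays of lattice random variables gives $\nu_{\phi,m}^{L-1}[S_{L-1} = N-m] \sim C/\sqrt{L}$, hence $-\frac{1}{L}\log(\cdot) \to 0$. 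I expect this to be the technically delicate point — one needs the local CLT uniformly as the marginal law itself varies with $m$ (and possibly with $L$ through the choice of $\phi$), so some care with the Fourier/characteristic-function argument and a uniform lower bound on the variance is required; this is presumably what is carried out in Appendix \ref{app:deriv}. Once this term is controlled, collecting the two surviving contributions gives $I_\rho(\rhobg) = s(\rho) - s(\rhobg)$, and the final assertion for the supercritical case follows because $s$ is constant on $[\rho_c,\infty)$ by \eq{phi} and \eq{thermoEnt}, so $s(\rho) = s(\rhobg) = \log z(\phi_c)$ whenever both $\rho, \rhobg \ge \rho_c$.
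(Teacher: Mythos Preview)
Your proposal is essentially the paper's proof for $\rhobg\le\rho_c$: same choice $\phi=\Phi(\rhobg)$, same term-by-term analysis of \eq{IL}, same invocation of a local limit theorem for triangular arrays to show $\nu_{\phi,m}^{L-1}[S_{L-1}=N-m]\sim 1/\sqrt{L}$. That part is fine.

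The gap is the supercritical background case $\rho_c<\rhobg<\rho$, which you defer with ``handled separately''. Here your local CLT argument breaks down. If you keep $\phi=\Phi(\rhobg)=\phi_c$, the mean of $S_{L-1}$ under $\nu_{\phi_c,m}^{L-1}$ is $(L-1)\rho_c$, while the target value $N-m\simeq(L-1)\rhobg$ is a genuine large deviation away from the mean, so the local CLT gives nothing. If instead you switch to $\phi=\Phi_m(\rhobg)$ to recentre, you now need the variance of the cut-off marginal $\nu_{\Phi_m(\rhobg),m}$ to stay bounded as $m\to\infty$ with $\rhobg>\rho_c$ fixed; but in this regime $\Phi_m(\rhobg)>1$ does not approach $\phi_c$ on the scale covered by the appendix lemmas (which require $\mu_n m_n^\gamma$ bounded), and the cut-off measure starts to pile up near the cut-off, so uniform variance control is not available without further work.

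The paper sidesteps this entirely: it keeps $\phi=\phi_c$ and, instead of a CLT, produces a crude sub-exponential \emph{lower} bound on $\nu_{\phi_c,m}^{L-1}[S_{L-1}=N-m]$ by placing the excess mass $N-m-(L-1)\rho_c$ on a finite number of sites (each holding at most $m$ particles, which is possible since $m$ grows linearly) and applying the local CLT to the remaining sites. Since an upper bound of $1$ is trivial, this suffices for $\frac{1}{L}\log(\cdot)\to 0$. You should add this argument (or something equivalent) to close the case $\rhobg>\rho_c$.
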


\begin{figure}[t]
  \centering
  \includegraphics[width=0.7\textwidth]{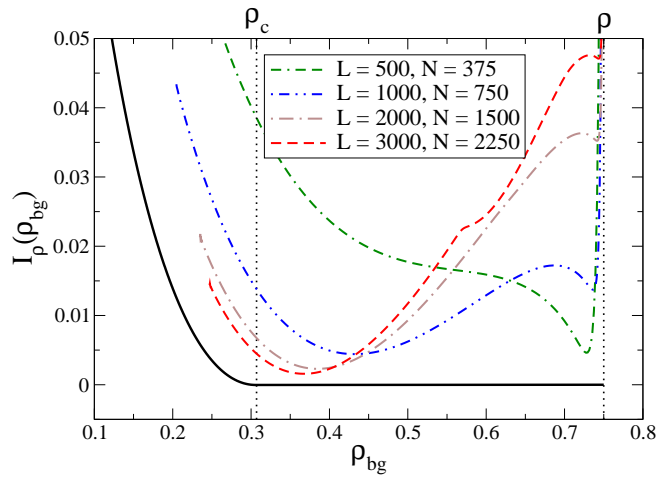}
  \caption{\label{fig:thermo} Thermodynamic limit rate function for a fixed system density $\rho=0.75$ with $\gamma=0.5$ and $b=4$. $I_{\rho}(\rhobg)$ is shown by a solid black line. The dashed lines show $I_{L,N}(m)$ against $\rhobg =(N-m)/(L-1)$ with $N/L=0.75$, for several values of $L$ calculated by exact numerics using \eq{CanMax}. Note convergence to $I_{\rho}(\rhobg)$ (solid black) is slow and initially non-monotonic above $\rho_c$.}
\end{figure}

\begin{proof}
  The proof follows directly from previous results on the equivalence of ensembles, for details see \cite{GrosskinskyCondensation}.
  It has been shown that
  \begin{align}
    \frac{1}{L}\log Z(L,N)\to s(\rho).
  \end{align}
  We use \eq{IL} and choose $\phi=\Phi(\rhobg)$ as defined by \eq{phi}.
  The cutoff $m\simeq N - \rhobg (L-1)$ diverges linearly in $L$ for all $\rhobg\in[0,\rho)$.
  It follows that for $\phi\leq\phi_c$ we have $z_{m}(\phi)\to z(\phi)$, and therefore with \eq{thermoEnt}
  \begin{align}
    \log z_{m}(\Phi(\rhobg))- \frac{N-m}{L-1}\log \Phi(\rhobg) \to %\log z(\Phi(\rhobg))- \rhobg\log \Phi(\rhobg) =  
    s(\rhobg)
  \end{align}
  It follows from the asymptotic behaviour of the single site weights \eq{AsymWk} that $\frac{1}{L}\log w(m)\to 0$, and the terms $\frac{1}{L}\log L$ and $O(e^{-\frac{-b}{1-\gamma}m^{1-\gamma}})$ also vanish in the limit \eq{mlimit}. 
  It remains to show that $\frac{1}{L}\log \nu_{\Phi(\rhobg),m}^{L-1}\left[S_{L-1}=N-m\right]\to 0$.
  In the %sub-critical 
  case $\rhobg\leq \rho_c$ we have $\Phi(\rhobg)\leq\phi_c$ and the first and second moments of $\nu_{\Phi(\rhobg),m}$ converge to those of $\nu_{\Phi(\rhobg)}$.
  The local limit theorem for triangular arrays (Thm. 1.2 in \cite{Daviselementary}) covers the sum, $S_L$, of independent random variables whose distribution depends on the number of terms $L$, which is the case here via the cut-off $m$.
  Convergence of the first two moments then implies
  \begin{align}
    \nu_{\Phi(\rhobg),m}^{L-1}\left[S_{L-1}=N-m\right]\sim \frac{1}{\sqrt{L}}\ ,
  \end{align}
  and the result follows immediately.
  The %supercritical 
  case $\rhobg >\rho_c$ can be reduced to the previous case following \cite{GrosskinskyCondensation}, by arranging the excess mass $N-\rho_c (L-1)$ in the background among a finite number of sites so that each site contains at most $m$ particles. 
  This provides a sub-exponential lower bound for $\nu_{\phi_c,m}^{L-1}\left[S_{L-1}=N-m\right]$ which completes the proof.
%  \begin{align}
%    1 \geq \nu_{\phi_c,m}^{L-1}\left[S_{L-1}=N-m\right] %&\geq  \nu_{\phi_c,m}^{L-1}\left[\{ \eta_i=m,i\in\{1,\cdots,l-1\},\eta_{l}=r,S_{L-l}=[\rho_c(L-l)] \}\right] \\
%    \geq \left(\nu_{\phi_c,m}\left[m\right] \right)^{l-1}\nu_{\phi_c,m}\left[r\right]\nu_{\phi_c,m}^{L-l}\left[S_{L-l}=[\rho_c(L-l)]\right]\ ,
%  \end{align}
%  where $r,l\in\Z$ are given by $lm + r = N-m-[\rho_c(L-l)]$ and $r<l$.
%  All terms decay substantially thus completing the proof.
\qed
\end{proof}
Previous results in the thermodynamic limit imply that for $N/L\to\rho>\rho_c$ the background density converges to $\rho_c$ (see \sect{thermo} \eq{Thermobg}).
Naturally, the thermodynamic limit result gives no indication of the sharp transition from fluid to condensed putative phases or the metastable switching observed on large finite systems.
For $\rho>\rho_c$ the system will appear to be condensed for sufficiently large $L$, since $\rho_{trans} (L)\searrow\rho_c$ as $L\to\infty$ (see \fig{cur}). Therefore the leading order finite size effects of Thm \ref{thm:first} do not describe the sharp transition or the metastability, as can be seen in \fig{thermo}. 
The apparent double well structure of $I_{L,N}$ is less pronounced for higher $L$, and the location of the global minimum shifts towards $\rho_c$. 
For $\rhobg =\rho$, i.e. $m/L\to 0$, the limit of $I_{L,N}$ actually depends on the precise scaling of m, which we do not discuss here (see [4] for more details).
To capture the sharp transition and the metastability we will replace the limit \eq{mlimit} by an appropriate scaling limit in section \ref{sec:rigour}.

\subsection{Heuristics on Metastability}
\label{subsec:heurMeta}

\begin{figure}[t]
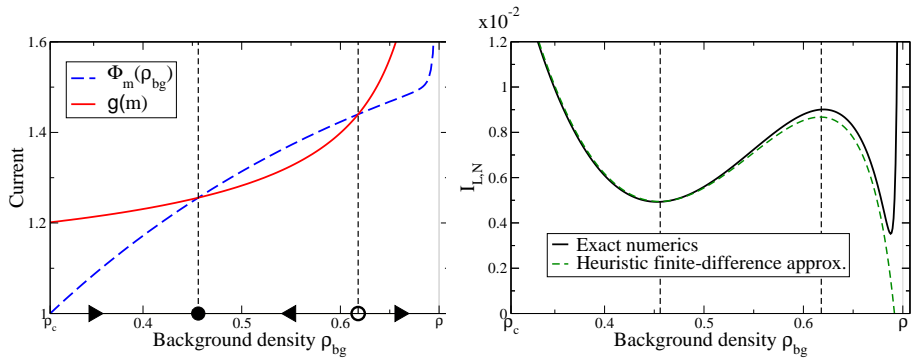

  \centering
  \includegraphics[width=0.48\textwidth]{currmatch_writeup}
  \includegraphics[width=0.49\textwidth]{canDist_writeup}
  \caption{\label{fig:curMatch}  Current matching on a finite system with $L=1000$ and $\rho=0.7$ close to $\rhotrans(1000)$ for $\gamma=0.5$, $b=4$. Left: The current in the background $\Phi_m(\rhobg)$ and current out of the maximum $g(m)$ plotted against $\rhobg=(N-m)/(L-1)$. Right: Exact numerics of $I_{L,N}$ using (\ref{eq:CanMax}) shown in solid black and approximation (\ref{eq:finiteDiff}) in dashed green. Vertical lines show the correspondence between current matching and critical points of $I_{L,N}$. The first local minimum corresponds to the lower stable branch of the current matching curve in \fig{approxs} and the local maximum corresponds to the upper unstable branch.  }
  % Note the approximation derived from the two currents fails for $\rhobg$ close to $\rho$.}
\end{figure}

\noindent We can understand the metastability on a heuristic level in terms of a simple current matching argument. 
$I_{L,N}(m)$ can be calculated efficiently for system sizes $L<4000$ using exact numerics (see \eq{CanMax}). 
In \fig{thermo} we see a clearly defined double well structure close to $\rhotrans(L)$ which accounts for the observed switching behaviour.
We may choose $\phi=\Phi_{m}(\rhobg)$ instead of $\phi=\Phi(\rhobg)$ in \eq{IL} so that the mean under the cut-off grand-canonical distribution is $\rhobg$.
The gradient of $I_{L,N}(m)$ is given by a one step finite difference.
If we assume for large $L$ that $\nu_{\Phi_{m}(\rhobg),m}^{L-1}\left[S_{L-1}/(L-1)\asymp \rhobg\right]\sim\frac{1}{\sqrt{2\pi L \sigma_c^2}}$ (which is valid in the scaling limit under certain conditions according to a local limit for triangular arrays (Thm. 1.2 in \cite{Daviselementary}), see Theorem \ref{thm:main} for details) and %for large $m$ we assume $m-1\approx m$
replace $m-1$ by $m$, then a straightforward calculation shows
% Actually no longer heuristic under such assumptions can give the order
\begin{align}
  \label{eq:finiteDiff}
  I_{L,N}\left(m-1\right) - I_{L,N}\left( m \right) \approx \frac{1}{L}\left(\log \Phi_m(\rhobg) - \log g(m)\right)\ ,
\end{align}
where $\rhobg=(N-m)/(L-1)$.
The first term on the right hand side follows in direct analogy with the thermodynamic limit for which $\partial_{\rho}s(\rho)=-\log\Phi(\rho)$ (cf.\ \cite{GrosskinskyEquivalence}), and the second term is %follows exactly from 
the finite difference of the maximum site contribution using \eq{wn}.

This result holds rigorously in scaling limit and even for relatively small systems ($L\approx 1000$) $I_{L,N}$ is well approximated by \eq{finiteDiff} as is shown in \fig{curMatch}.
The approximation breaks down for background densities close to $\rho$ since the cut off $m$ becomes small and so the approximation $m-1\approx m$ is no longer valid. 
%The stationary distribution of the maximum therefore is approximately given by the stationary distribution of a one dimensional Markov process with jump rates $\Phi_m(\rhobg)$ and $g(m)$ (see \sect{dynamics} for a discussion).
By the above argument we expect that solutions to the current matching equation (\ref{eq:CurMatch}) correspond to local maxima or minima of the rate function.
Whilst the current out of the maximum occupied site is greater than the current in the background we expect the background density to increase and vice versa.
Therefore the first point that the two currents cross is locally stable and so a local minimum of the rate function.
The next point they cross is a local maximum by the same argument, and there is another local minimum at a point close to $\rho$ associated with fluid configurations, which the current matching argument predicts to be at the boundary. % as explained above.
This is shown in \fig{curMatch}.

\section{Rigorous Scaling Limit}
\label{sec:rigour}
In this section we explore the finite size effects by examining the leading order behaviour at the critical scale on which metastability persists.
As $L\to\infty$ this scale is  given by
%as $L\to\infty$ and $N/L\to\rho_c$ at the critical scaling (the asymptotic rate that $\rhotrans(L)\to\rho_c$, see \sect{heur}).
%We therefore make the scaling ansatz introduced in \sect{heur} and so choose
\begin{align}
  \label{eq:critScale}
  N &= \rho_c L + \delta\rho L^{1-\alpha} + o(L^{1-\alpha}) \nonumber\\
  m &=(\delta\rho- \delta\rhobg) L^{1-\alpha} + o(L^{1-\alpha}) \quad \textrm{for}\quad \rhobg\in [0,\delta\rho)
\end{align}
for suitable $\alpha \in (0,1)$.
We may equivalently express it as
\begin{align*}
  N/L &= \rho_c + \delta\rho L^{-\alpha} + o(L^{-\alpha}) \\
  \frac{N-m}{L-1} &= \rho_c + \delta\rhobg L^{-\alpha} + o(L^{-\alpha})\ . \\
\end{align*}
%We wish to examine the overshoot region on the correct scale so that to leading order there is a unique description of the canonical overshoot and the metastability.
%We wish to examine the overshoot region on the correct scale so that we may describe the leading order behaviour of the canonical overshoot and metastability.
%This together with the \eq{critScale} make up our scaling ansatz.
%If we assume that close to $\rho_c$ the fluid current \eq{FluidCur} can be approximated by the first term in the Taylor expansion around $\rho_c$ then $(\Phi_{\etam}(\rhobg)-1)\simeq \frac{(\rhobg-\rho_c)}{\sigma_c^2}$.
%The scaling ansatz requires that the current matching argument \eq{CurMatch} has unique non degenerate solutions in the scaling limit.
The correct scaling exponent $\alpha$ to capture the canonical overshoot and metastability can be determined heuristically by our previous current matching argument \eq{CurMatch}.
If we assume that close to $\rho_c$ the fluid current \eq{FluidCur} can be approximated by the first term in the Taylor expansion around $\rho_c$, then $(\Phi_m (\rhobg)-1)\simeq \frac{\delta\rhobg}{\sigma_c^2}\, L^{-\alpha}$. 
The current matching equation (\ref{eq:CurMatch}) then implies
\begin{align*}
  \frac{\delta\rhobg}{\sigma_c^2}\, L^{-\alpha}\simeq g(m)-1\simeq b/m^\gamma =b(\delta\rho- \delta\rhobg)^{-\gamma} L^{-\gamma (1-\alpha )}\ ,
\end{align*}
which leads to $\alpha=\frac{\gamma}{1+\gamma}$.
In order to make this argument rigorous, also to find the transition point and describe the metastability, we study the canonical distribution of the background density (or equivalently the maximally occupied site) in the scaling limit. 

\subsection{The rate function}
We examine the asymptotic behaviour of $I_{L,N} (m)$ introduced in (\ref{eq:IDef}). 
We will see in Theorem \ref{thm:main} that there exists a unique $1>\beta>0$ such that $\lim\limits_{L\to\infty}L^{1-\beta}I_{L,N}(m)$ is finite and non-zero at the critical scale, we define
\begin{align}
  \label{eq:ScalingI}
  I^{(2)}_{\delta \rho}(\delta\rhobg):&=\lim\limits_{L\to\infty}L^{1-\beta}I_{L,{N}}(m)\ ,
\end{align}
where $N$ and $m$ are given by \eq{critScale}.
Since $I_\rho(\rhobg)=0$ for $\rhobg\geq\rho_c$ (see Sec.\ 4.2) this definition implies
\begin{align}
  %\pi_{L,N}\left[ \SLbg/(L-1)\asymp \rhobg \right]
  \pi_{L,N}\left[ M_L =m\right]\sim e^{-L^\beta I^{(2)}_{\delta\rho} \left(\delta\rhobg\right)}
\end{align}
as $L\to\infty$, we recall that if $ M_L = m$ the background density is $\frac{N-m}{L-1}$ and $\delta\rho$ and $\delta\rhobg$ are given by \eq{critScale}. 
We begin by describing the scaling limit for the current in the fluid bulk and out of the maximally occupied site.
%Although the proof is restricted to $\delta\rhobg<\frac{b}{(1-\gamma)(\delta\rho-\delta\rhobg)}$ this includes the point of intersection of the two currents and we discuss how this result can be extended to the whole region $\delta\rhobg \in(0,\delta\rho)$ at the end of Section \ref{sec:rigour}.
\begin{theorem}[Current scaling limits]
  \label{thm:curScale}
  %$\alpha=\frac{\gamma}{1+\gamma}$.
  For the scaling limit (\ref{eq:critScale}), with $\alpha=\frac{\gamma}{1+\gamma}$, the current out of the most occupied site and the average current in the fluid background are asymptotically given by
  \begin{align}
    g(m)=1+\frac{b}{(\delta\rho-\delta\rhobg)^\gamma}L^{-\alpha}+o(L^{-\alpha})\ ,\nonumber\\
    % \log \Phi_{\etam}(\rhobg) = M_L =\Phi_{\etam}(\rho_c+\delta\rho L^{-\alpha})=\phi_c+\frac{1}{\sigma_c^2}\delta\rho L^{-\alpha}+o(L^{-\alpha}).
    \Phi_{m}\left(\frac{N-m}{L-1}\right) =1 + \frac{1}{\sigma_c^2}\delta\rhobg L^{-\alpha}+o(L^{-\alpha}).\nonumber
  \end{align}
  provided $\frac{1}{\sigma_c^2}\delta\rhobg<\frac{b}{(1-\gamma)(\delta\rho-\delta\rhobg)^{\gamma}}$ .
\end{theorem}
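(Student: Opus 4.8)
The plan is to treat the two asymptotics separately. The statement for $g(m)$ is immediate: by \eq{rates} we have $g(m)=1+bm^{-\gamma}$, and since the choice $\alpha=\gamma/(1+\gamma)$ forces $\gamma(1-\alpha)=\alpha$, substituting $m=(\delta\rho-\delta\rhobg)L^{1-\alpha}(1+o(1))$ from \eq{critScale} gives $m^{\gamma}=(\delta\rho-\delta\rhobg)^{\gamma}L^{\alpha}(1+o(1))$ and hence $bm^{-\gamma}=b(\delta\rho-\delta\rhobg)^{-\gamma}L^{-\alpha}+o(L^{-\alpha})$, as claimed.

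For $\Phi_m$ the strategy is to show that on the window $\phi=1+cL^{-\alpha}+o(L^{-\alpha})$ the cut-off ensemble $\nu_{\phi,m}$ has, to the relevant order, the mean of the critical grand-canonical ensemble perturbed linearly, $R_m(\phi)=\rho_c+\sigma_c^{2}(\phi-1)+o(L^{-\alpha})$, and then to invert this relation. The key preliminary observation is that the boundary weight is super-polynomially small: using \eq{AsymWk}, $\log\big(w(m)\phi^{m}\big)\simeq-\tfrac{b}{1-\gamma}m^{1-\gamma}+m\log\phi$, and inserting the scaling \eq{critScale} the right-hand side equals $L^{(1-\gamma)/(1+\gamma)}\big[\,c(\delta\rho-\delta\rhobg)-\tfrac{b}{1-\gamma}(\delta\rho-\delta\rhobg)^{1-\gamma}\,\big]+(\text{lower order})$, whose bracket is strictly negative precisely under the hypothesis $c=\delta\rhobg/\sigma_c^{2}<b/[(1-\gamma)(\delta\rho-\delta\rhobg)^{\gamma}]$. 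Moreover $k\mapsto\log w(k)+k\log\phi$ is, by \eq{AsymWk}, decreasing and then increasing, with minimum at $k^{\ast}\asymp L^{1-\alpha}$ of the same order as the cut-off $m$; consequently $w(k)\phi^{k}$ is super-polynomially small throughout the range $L^{\alpha/2}\le k\le m$, since over that range its maximum is attained at an endpoint ($k=L^{\alpha/2}$, where $\log w\asymp-L^{\alpha(1-\gamma)/2}$, or $k=m$, just discussed).

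With this in hand I would split $z_m(\phi)=\sum_{k\le L^{\alpha/2}}w(k)\phi^{k}+\sum_{L^{\alpha/2}<k\le m}w(k)\phi^{k}$: in the first sum $k\log\phi=o(1)$, so $\phi^{k}=1+k\log\phi+O\!\big(k^{2}(\log\phi)^{2}\big)$, and using that $\nu_{\phi_c}$ has finite first three moments (a consequence of \eq{AsymWk}) together with the super-polynomial bound on the second sum gives $z_m(\phi)=z(\phi_c)\big(1+\rho_c\log\phi\big)+o(L^{-\alpha})$; the same splitting applied to $\sum_{k=0}^{m}k\,w(k)\phi^{k}$ gives $z(\phi_c)\big(\rho_c+(\sigma_c^{2}+\rho_c^{2})\log\phi\big)+o(L^{-\alpha})$. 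Dividing and expanding the denominator (legitimate since $\log\phi\to0$) yields $R_m(\phi)=\rho_c+\sigma_c^{2}\log\phi+o(L^{-\alpha})=\rho_c+\sigma_c^{2}(\phi-1)+o(L^{-\alpha})$, uniformly for $c$ in any compact subinterval of $\big[\,0,\,b/[(1-\gamma)(\delta\rho-\delta\rhobg)^{\gamma}]\,\big)$. Finally I would invert using that $R_m$ is strictly increasing (stated below \eq{CutDens}): for fixed $\theta>0$ small enough that $\delta\rhobg/\sigma_c^{2}+\theta$ is still below the critical value, evaluating the expansion at $\phi_L^{\pm}=1+(\delta\rhobg/\sigma_c^{2}\pm\theta)L^{-\alpha}$ gives $R_m(\phi_L^{-})<\rho_c+\delta\rhobg L^{-\alpha}+o(L^{-\alpha})=(N-m)/(L-1)<R_m(\phi_L^{+})$ for all large $L$, so monotonicity squeezes $\Phi_m\big((N-m)/(L-1)\big)$ between $\phi_L^{-}$ and $\phi_L^{+}$; letting $\theta\downarrow0$ gives $\Phi_m\big((N-m)/(L-1)\big)=1+\tfrac{1}{\sigma_c^{2}}\delta\rhobg L^{-\alpha}+o(L^{-\alpha})$.

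The main obstacle is the uniform control of the intermediate-to-boundary tail $\sum_{L^{\alpha/2}<k\le m}w(k)\phi^{k}$: one must verify carefully, from the decreasing-then-increasing shape of $k\mapsto\log w(k)+k\log\phi$ and the leading-order form \eq{AsymWk} (the subleading corrections to $\log w(n)$ are $O(n^{1-2\gamma})$ or smaller, hence negligible against the $L^{(1-\gamma)/(1+\gamma)}$ scale controlling the boundary term), that this tail is genuinely super-polynomially small and therefore contributes only $o(L^{-\alpha})$ rather than $O(L^{-\alpha})$. It is exactly this gap that distinguishes the fluid from the condensed regime, and the hypothesis of the theorem is precisely what guarantees it; carrying all estimates through uniformly in $c$, which is what makes the concluding inversion legitimate, is then routine bookkeeping.
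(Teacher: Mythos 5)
Your proof is correct and arrives at the result by the same overall strategy as the paper (the $g(m)$ part is verbatim; for $\Phi_m$ you Taylor-expand $R_m(\phi)$ in the scaling window $\phi=1+O(L^{-\alpha})$ and invert by monotonicity), but the technical device you use to control the Taylor remainder of the truncated sums is different. You split $z_m(\phi)$ and its first-moment sum at $k=L^{\alpha/2}$, Taylor-expand $\phi^k$ in $k\log\phi$ on the low-$k$ part where that quantity is $o(1)$, and dispose of the intermediate range $L^{\alpha/2}<k\le m$ by the observation that $k\mapsto\log w(k)+k\log\phi$ is first decreasing then increasing with its minimum at $k^*\asymp L^{1-\alpha}$, so its maximum on that interval is attained at an endpoint, where it is of order $-L^{\epsilon}$ for some $\epsilon>0$ (the hypothesis $\delta\rhobg/\sigma_c^2<b/[(1-\gamma)(\delta\rho-\delta\rhobg)^{\gamma}]$ is exactly what makes the $k=m$ endpoint negative). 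The paper instead works through Lemmas \ref{Taylor} and \ref{convMoms}: it writes the truncated moment sum $f_{m_n}(\mu)=\sum_{k\le m_n}k^i w(k)e^{\mu k}$, applies Taylor's theorem with Lagrange remainder in $\mu$, and bounds the second derivative uniformly by noting that under the scaling hypothesis $\mu_n\le (C+\epsilon)k^{-\gamma}$ for all $k\le m_n$, so $e^{\mu_n k}\le e^{(C+\epsilon)k^{1-\gamma}}$ is dominated by $w(k)$ termwise with $C+\epsilon<b/(1-\gamma)$. Both arguments deploy the hypothesis in the same role (keeping the boundary contribution negligible), but the paper's termwise domination is a little slicker since it avoids any splitting and gives uniform $O(\mu_n^2)$ remainders directly, whereas your version is more explicit and self-contained; your squeeze-inversion with $\phi_L^{\pm}$ also makes the final step, which the paper leaves somewhat implicit, more precise. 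There is no gap.
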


\begin{proof}
  The result for the current out of the condensate $g(m)$ is immediate since $m = (\delta\rho-\delta\rhobg)L^{1-\alpha} + o(L^{1-\alpha})$, $g(m)= 1 + bm^{-\gamma}$ and $(1-\alpha)\gamma=\alpha$. 

  The proof of the fluid result follows from a Taylor expansion of the truncated density function $R_{m}(\phi)$ introduced in (\ref{eq:CutDens}), details on the expansion are contained in Lemmas \ref{Taylor} and \ref{convMoms} in the appendix.
  For convenience we write $\phi = e^{\mu}$, then
  \begin{align*}
    R_{m}(e^{\mu L^{-\alpha}})=\rho_{c}+\sigma_c^2\mu L^{-\alpha}+o(L^{-\alpha})
  \end{align*}
  for all $\mu \in [0,\frac{b}{(1-\gamma)(\delta\rho-\delta\rhobg)^{\gamma}})$.
  The result follows directly since
  \begin{align*}
    e^{\mu L^{-\alpha}} &= \Phi_{m}\left(R_{m}(e^{\mu L^{-\alpha}})\right)  \\
    \Rightarrow 1 + \mu L^{-\alpha} + o(L^{-\alpha}) &= \Phi_{m}\left( \rho_{c} + \sigma_c^{2}\mu L^{-\alpha}+o(L^{-\alpha}) \right)
  \end{align*}
  and choosing $\mu$ so that $\sigma_c^{2}\mu =\delta\rhobg$.
\qed
\end{proof}
Although the proofs in this section are restricted to $\frac{1}{\sigma_c^{2}}\delta\rhobg<\frac{b}{(1-\gamma)(\delta\rho-\delta\rhobg)}$ this includes the point of intersection of the two currents (cf.\ \fig{curMatch}) and we discuss a possible extension to the whole region $\delta\rhobg \in(0,\delta\rho)$ at the end of this Subsection. % \ref{sec:rigourEnd}.
Theorem \ref{thm:curScale} states that in the scaling limit the average current in the fluid background and the jump rate out of the condensate converge to the critical current $\phi_c=1$ on the same scale as the density $N/L$ converges to $\rho_c$.
So for $N$ and $m$ at the critical scale, given by \eq{critScale}, $\left(1-\Phi_m(\rhobg)\right)L^{\alpha}$ and $\left(1-g(m)\right)L^{\alpha}$ converge to unique functions of the scaled variables $\delta\rho$ and $\delta\rhobg$.
We now show that in the scaling limit the rate function is exactly the integral of the difference in the two limiting currents (up to normalisation).

\begin{theorem}[Scaling limit rate function]
  \label{thm:main}
  In the scaling limit (\ref{eq:critScale}), with $\alpha=\frac{\gamma}{1+\gamma}$, the rescaled rate function (\ref{eq:ScalingI}) with $\beta=\frac{1-\gamma}{1+\gamma}$ converges as
  \begin{align*}
    \lefteqn{I^{(2)}_{\delta \rho}(\delta \rhobg) = \lim_{L\to\infty}L^{1-\beta}I_{L,N}(m)}\\
    & &= \frac{\delta\rhobg^2}{2\sigma_c^2} + \frac{b}{1-\gamma}(\delta\rho-\delta\rhobg)^{1-\gamma} -\inf_{r\in(0,\delta\rho)}\left\{\frac{r^2}{2\sigma_c^2}+ \frac{b}{1-\gamma}(\delta\rho-r)^{1-\gamma}\right\} 
  \end{align*}
  provided $\frac{1}{\sigma_c^2}\delta\rhobg<\frac{b}{(1-\gamma)(\delta\rho-\delta\rhobg)^{\gamma}}$.
\end{theorem}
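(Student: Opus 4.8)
The plan is to work directly from the finite-size identity (\ref{eq:IL}), multiplied through by $L^{1-\beta}=L^{2\alpha}$ (using the arithmetic identities $1-\beta=2\alpha$, $1-2\alpha=\beta$ and $(1-\gamma)(1-\alpha)=\beta$), and to track each term in the scaling limit (\ref{eq:critScale}). The key is to make the choice $\phi=\Phi_{m}\!\big(\tfrac{N-m}{L-1}\big)$, so that the cut-off grand canonical measure $\nu_{\phi,m}$ has mean exactly $\rhobg=(N-m)/(L-1)$. With $F(r):=\tfrac{r^2}{2\sigma_c^2}+\tfrac{b}{1-\gamma}(\delta\rho-r)^{1-\gamma}$, the claimed limit is $F(\delta\rhobg)-\inf_{r\in(0,\delta\rho)}F(r)$, and I would obtain the two pieces of $F(\delta\rhobg)$ from the maximal-site term and the cut-off entropy in (\ref{eq:IL}), and the infimum from the canonical normalisation.

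\emph{Vanishing and maximal-site terms.} Since $\nu_{\phi,m}$ is a triangular array with $\phi\to 1$, $m\to\infty$, and moments converging to those of $\nu_1$ (in particular variance $\to\sigma_c^2$ and uniformly bounded third moment, because the weights decay stretched-exponentially by (\ref{eq:AsymWk})), the local limit theorem for triangular arrays (Thm.\ 1.2 in \cite{Daviselementary}, as already used for Theorem \ref{thm:first}) applies at the mean and gives $\nu_{\phi,m}^{L-1}[S_{L-1}=N-m]\simeq (2\pi (L-1)\sigma_c^2)^{-1/2}$; hence $-\tfrac1L\log\nu_{\phi,m}^{L-1}[S_{L-1}=N-m]=O(L^{-1}\log L)$, which vanishes after multiplication by $L^{1-\beta}$, as do the $\tfrac1L\log L$ term, the $(L-1)/L$ prefactor discrepancy, and the $O(e^{-\frac{b}{1-\gamma}m^{1-\gamma}})$ error. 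For the maximal-site term, $\log(1+bk^{-\gamma})=bk^{-\gamma}+O(k^{-2\gamma})$ together with Euler--Maclaurin and (\ref{eq:wn}) give $\log w(m)=-\tfrac{b}{1-\gamma}m^{1-\gamma}+o(m^{1-\gamma})$; with $m=(\delta\rho-\delta\rhobg)L^{1-\alpha}+o(L^{1-\alpha})$ this term equals $\tfrac{b}{1-\gamma}(\delta\rho-\delta\rhobg)^{1-\gamma}L^{\beta-1}+o(L^{\beta-1})$ and contributes $\tfrac{b}{1-\gamma}(\delta\rho-\delta\rhobg)^{1-\gamma}$.

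\emph{Cut-off entropy and normalisation.} Write $s_m(\rhobg):=\log z_m(\phi)-\rhobg\log\phi$ at $\phi=\Phi_m(\rhobg)$, the Legendre transform $\sup_{\mu}(\log z_m(e^{\mu})-\rhobg\mu)$, so that $\tfrac{d}{d\rhobg}s_m(\rhobg)=-\log\Phi_m(\rhobg)$ and $\Phi_m(\rho_c^{(m)})=1$ with $\rho_c^{(m)}:=R_m(1)\to\rho_c$ faster than any power of $L^{-1}$. Integrating and using Theorem \ref{thm:curScale} (and Lemmas \ref{Taylor}, \ref{convMoms}), $\log\Phi_m(r)=(r-\rho_c)/\sigma_c^2+o(r-\rho_c)$ uniformly on the relevant range, so $s_m(\rhobg)=\log z(1)-\tfrac{\delta\rhobg^2}{2\sigma_c^2}L^{-2\alpha}+o(L^{-2\alpha})$ (the replacement $\log z_m(1)\to\log z(1)$ and $\rho_c^{(m)}\to\rho_c$ being negligible on every polynomial scale). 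The proviso $\tfrac{1}{\sigma_c^2}\delta\rhobg<\tfrac{b}{(1-\gamma)(\delta\rho-\delta\rhobg)^{\gamma}}$ is precisely the condition under which this Taylor expansion of $R_m$ is valid. For the normalisation we use the standard identity $Z(L,N)=z(1)^L\,\nu_1^L[S_L=N]$, so that it remains to show $L^{-\beta}\log\nu_1^L[S_L=N]\to-\inf_{r\in(0,\delta\rho)}F(r)$ for $N=\rho_c L+\delta\rho L^{1-\alpha}+o(L^{1-\alpha})$. Collecting the above, the $\log z(1)$ terms cancel and $L^{1-\beta}I_{L,N}(m)\to \tfrac{\delta\rhobg^2}{2\sigma_c^2}+\tfrac{b}{1-\gamma}(\delta\rho-\delta\rhobg)^{1-\gamma}-\inf_{r\in(0,\delta\rho)}F(r)$.

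\emph{The main obstacle.} The delicate step is the moderate-deviation estimate $L^{-\beta}\log\nu_1^L[S_L=N]\to-\inf_{r}F(r)$. This occurs exactly at the crossover scale $L^{1-\alpha}=L^{1/(1+\gamma)}$ at which the Gaussian cost of a bulk deviation of size $rL^{1-\alpha}$, namely $\exp(-\tfrac{r^2}{2\sigma_c^2}L^{1-2\alpha})=\exp(-\tfrac{r^2}{2\sigma_c^2}L^{\beta})$, and the single-big-jump cost $w((\delta\rho-r)L^{1-\alpha})\sim\exp(-\tfrac{b}{1-\gamma}(\delta\rho-r)^{1-\gamma}L^{\beta})$ are of the same exponential order $L^\beta$; optimising over the split $r$ (and using concavity of $x\mapsto x^{1-\gamma}$ to exclude several big jumps) yields $-\inf_r F(r)$. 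The lower bound follows by exhibiting the single-big-jump configuration together with a local CLT for the remaining $L-1$ coordinates; the upper bound — ruling out all cheaper strategies — is the hard part, and I would import it from the critical-density analysis of \cite{ArmendarizZero} (equivalently, prove it by a Chernoff/truncation argument combined with Nagaev-type tail bounds for subexponential sums). As a consistency check, the infimum term is forced a posteriori by $\sum_m\pi_{L,N}[M_L=m]=1$, which makes $\min_{\delta\rhobg}I^{(2)}_{\delta\rho}(\delta\rhobg)=0$.
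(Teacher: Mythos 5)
Your proposal follows the paper's structure for most terms: you start from (\ref{eq:IL}), take $\phi=\Phi_m\big(\tfrac{N-m}{L-1}\big)$ so the cut-off measure is centred, invoke the Davis--McDonald local limit theorem for triangular arrays (using Lemma \ref{convMoms} to control the variance), extract the condensate contribution $\tfrac{b}{1-\gamma}(\delta\rho-\delta\rhobg)^{1-\gamma}$ from $\log w(m)$ via \eq{AsymWk} and $(1-\alpha)(1-\gamma)=\beta$, and obtain the quadratic piece $\tfrac{\delta\rhobg^2}{2\sigma_c^2}$ from a Taylor expansion of the cut-off entropy — all exactly as the paper does. You also correctly identify the proviso $\tfrac{1}{\sigma_c^2}\delta\rhobg<\tfrac{b}{(1-\gamma)(\delta\rho-\delta\rhobg)^\gamma}$ as the validity condition for Lemma \ref{Taylor}.

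The one place you genuinely diverge is the normalisation term $\tfrac1L\log Z(L,N)$. You write $Z(L,N)=z(1)^L\,\nu_1^L[S_L=N]$ and set out to prove the moderate-deviation estimate $L^{-\beta}\log\nu_1^L[S_L=N]\to-\inf_r F(r)$ from scratch (single-big-jump lower bound, Nagaev-type truncation upper bound), pointing to \cite{ArmendarizZero} for the hard upper-bound direction. The paper instead introduces the unnormalised measure $\mu_{L,N}=\tfrac{Z(L,N)}{z^L(1)}\pi_{L,N}$, notes that $Z(L,N)=z^L(1)\sum_m\mu_{L,N}[M_L=m]$, sandwiches $Z(L,N)$ between $z^L(1)\max_m\mu_{L,N}[M_L=m]$ and $N$ times the same, and then simply recycles the already-derived per-$m$ asymptotic $-L^{-\beta}\log\mu_{L,N}[M_L=m]\to F(\delta\rhobg)$ to read off the infimum. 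This bootstrap is more economical: the only fact it needs to import from \cite{ArmendarizZero} is that the $\arg\max_m$ lands inside the region where the per-$m$ asymptotic has been established (equivalently, that the contribution from $m$ outside that region is negligible), rather than a full moderate-deviation principle for $\nu_1^L[S_L=N]$. Your route is logically sound — and in fact the two appeals to the reference boil down to the same underlying "one big jump dominates" statement — but the paper's version extracts more mileage out of the work it has already done on $\mu_{L,N}[M_L=m]$, which is worth adopting. One further small point: your sentence $\rho_c^{(m)}\to\rho_c$ "faster than any power of $L^{-1}$" is correct but needs $m\to\infty$ on the scale $L^{1-\alpha}$ together with the stretched-exponential tail of $w$; it is safest to route that estimate through Lemma \ref{convMoms} rather than asserting it directly.
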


\begin{proof}
  Firstly consider the unnormalised measure
  \begin{align}
    \label{eq:mu}
    \mu_{L,N} = \frac{Z(L,N)}{z^L(1)}\pi_{L,N}\ ,
  \end{align}
  for which it follows from definitions in Section \ref{subsec:statmeasure} and \eq{IDef}
  \begin{align}
    \frac{1}{L^\beta}\log\mu_{L,N}\left[ M_L = m\right] = -L^{1-\beta}I_{L,N}(m) +\frac{1}{L^{\beta}}\log Z(L,N) - L^{1-\beta}\log z(1)\ .
  \end{align}
  Re-writing $I_{L,N}$ using \eq{IL} we find
  \begin{align}
    \label{eq:scalingI}
    -\frac{1}{L^\beta} \log\mu_{L,N}\left[ M_L = m\right]  =&  -\frac{(L-1)}{L^{\beta}}\Big(\log z_{m}(\Phi_m(\rhobg^L))- \rhobg^L\log \Phi_m(\rhobg^L) \Big)+\nonumber \\
    & +L^{1-\beta}\log z(1)-\frac{1}{L^{\beta}}\log w(m) - \frac{1}{L^{\beta}}\log L+ \nonumber\\
    &  - \frac{1}{L^{\beta}}\log \nu_{\Phi_m(\rhobg^L),m}^{L-1}\left[S_{L-1}{=}N-m \right]  {-} o(1)\ ,
  \end{align} 
  where we used the shorthand $\rhobg^L =(N-m)/(L-1)$.
  By applying Lemma \ref{convMoms}, given in the appendix, the variance under the cut-off distribution $\nu_{\Phi_m(\rhobg^L),m}$ converges to the variance of the critical measure for $\frac{1}{\sigma_c^2}\delta\rhobg<\frac{b}{(1-\gamma)(\delta\rho-\delta\rhobg)^\gamma}$. 
  By the local limit theorem for triangular arrays (Theorem $1.2$ in \cite{Daviselementary} ) we have
  \begin{align}
    \label{eq:asymNu}
    \nu_{\Phi_{m}(\rhobg^L),m}^{L-1}\left[S_{L-1}=N-m\right] \sim \frac{1}{\sqrt{2\pi L \sigma_c^2}}
  \end{align}
  since $\rhobg^L\to\rho_c$ and $\E{\nu_{\Phi_{m}(\rhobg^L),m}}{\eta_x} \to \rho_c$ as $L\to\infty$ with the scaling (\ref{eq:critScale}).
  So the contribution of this term to (\ref{eq:scalingI}) vanishes.
  Also since $\beta>0$ we have $L^{-\beta}\log L \to 0$ as $L\to\infty$.

  The condensate contribution to \eq{scalingI} is given by 
  \begin{align}
    \label{eq:asymLogW}
    \frac{1}{L^{\beta}}\log w(m)&\simeq -\frac{1}{L^{\beta}}\frac{b}{1-\gamma}(\delta\rho-\delta\rhobg)^{1-\gamma}L^{(1-\alpha)(1-\gamma)}\nonumber\\
    &= -\frac{b}{1-\gamma}(\delta\rho-\delta\rhobg)^{1-\gamma}
  \end{align}
  where we have used the asymptotic behaviour of the single site weights (\eq{AsymWk}) and  $(1-\alpha)(1-\gamma)=\beta$.

  By applying Theorem \ref{thm:curScale} to $\Phi_m(\rhobg^L)$ and taking Taylor expansion of $z_m$ according to Lemma \ref{convMoms} (appendix),
   \begin{align}
    \label{eq:asymLogPhi}
    L^{1-\beta}\left( \rho_c + \delta\rhobg L^{-\alpha}\right)\log\Phi_{m}(\rho_c + \delta\rhobg L^{-\alpha}) &= \frac{\delta\rhobg}{\sigma_c^2}L^{\alpha}\rho_c  + \frac{\delta\rhobg^2}{\sigma_c^2} + o(1)
  \end{align}
  and
  \begin{align}
    \label{eq:asymLogz}
    \lefteqn{L^{1-\beta}\log z_{m}\left(\Phi_{m}(\rhobg^L)\right) =}\\ 
     &=L^{1-\beta}\log\left( z(1) + \frac{\delta\rhobg}{\sigma_c^2}L^{-\alpha}z'(1) + \frac{\delta\rhobg^2}{2\sigma_c^4}L^{-2\alpha}z''(1) +o(L^{-2\alpha})\right)\nonumber\\
     &=L^{1-\beta}\log z(1) + \frac{\delta\rhobg}{\sigma_c^2}L^{\alpha}\rho_c + \frac{\delta\rhobg^2}{2\sigma_c^2}+ o(1).
  \end{align}
  since $(1-\beta)=2\alpha$.
  Combining Equations (\ref{eq:asymNu}) to (\ref{eq:asymLogz}) in (\ref{eq:scalingI}) implies,
  \begin{align}
    \label{eq:asymMu}
    -\frac{1}{L^\beta}\log\mu_{L,N}\left[ M_L = m\right] = \frac{\delta\rhobg^2}{2\sigma_c^2}+ \frac{b}{1-\gamma}(\delta\rho-\delta\rhobg)^{1-\gamma}+o(1)
  \end{align}
  in the scaling limit (\ref{eq:critScale}) as $L\to\infty$, provided $\frac{1}{\sigma_c^2}\delta\rhobg<\frac{b}{(1-\gamma)(\delta\rho-\delta\rhobg)^{\gamma}}$.

  By definition
  \begin{align}
    Z(L,N) = z^L(1)\sum_{m=1}^{N}\mu_{L,N}\left[ M_L = m\right]. \nonumber
  \end{align}
  So we may bound $Z(L,N)$ as follows,
  \begin{align}
    z^L(1)\max_{m}\{\mu_{L,N}\left[ M_L = m\right]\} \leq Z(L,N) \leq N z^L(1)\max_{m}\{\mu_{L,N}\left[ M_L = m\right]\}. \nonumber
  \end{align}
  Applying \eq{asymMu} the asymptotic behaviour of $L^{-\beta}\log Z(L,N)$ in the scaling limit is given by
  \begin{align}
    \label{eq:Zln}
    L^{-\beta}\log Z(L,N) = L^{1-\beta}\log z(1) - \inf_{r\in (0,\delta\rho)}\left\{\frac{r^2}{2\sigma_c^2}+ \frac{b}{1-\gamma}(\delta\rho-r)^{1-\gamma}\right\}+o(1)\ ,
  \end{align}
  provided that the global minimum of $-\frac{1}{L^\beta}\log\mu_{L,N}\left[ M_L = m\right]$ is attained for $\frac{1}{\sigma_c^2}\delta\rhobg<\frac{b}{(1-\gamma)(\delta\rho-\delta\rhobg)^{\gamma}}$ in the scaling limit.
  This follows from results in \cite{ArmendarizZero}.
  Combining Equations (\ref{eq:asymMu}) and (\ref{eq:Zln}) the result follows from definitions \eq{IL} and \eq{mu}
  \begin{align*}
    L^{1-\beta}I_{L,N}(\rhobg) =& \frac{\delta\rhobg^2}{2\sigma_c^2}+ \frac{b}{1-\gamma}(\delta\rho-\delta\rhobg)^{1-\gamma}+\\
    &- \inf_{r\in (0,\delta\rho)}\left\{\frac{r^2}{2\sigma_c^2}+ \frac{b}{1-\gamma}(\delta\rho-r)^{1-\gamma}\right\} + o(1)\ .
  \end{align*}
  %For a proof that the global minimum is attained for $\frac{1}{\sigma_c^2}\delta\rhobg<\frac{b}{(1-\gamma)(\delta\rho-\delta\rhobg)^{\gamma}}$ see \cite{ArmendarizZero}.
\qed
\end{proof}

%Although the rigours results (cf. Theorem \ref{thm:main}) only hold for $\frac{1}{\sigma_c^2}\delta\rhobg<\frac{b}{(1-\gamma)(\delta\rho-\delta\rhobg)^{\gamma}}$ this can be extended to $\delta\rhobg \in (0\delta\rho)$ by the same methods. 
The result may break down for $\frac{1}{\sigma_c^2}\delta\rhobg>\frac{b}{(1-\gamma)(\delta\rho-\delta\rhobg)^{\gamma}}$ due to the large probability of a second sub-condensate forming on the same scale as the condensate.
The conditions under which this occurs could be found by conditioning on a maximum site occupation and using the same methods as for Theorem \ref{thm:main} to find the distribution over the second highest occupied site.
By using this method and iterating when necessary, the results here could be extended to the entire region $\delta\rhobg \in (0,\delta\rho)$. %by considering distribution of next largest.
However our results already cover the relevant critical points of the rate function, the fluid minimum, condensed minimum and local maximum that constitutes a potential barrier.

\subsection{Current Matching and Overshoot}

Theorem \ref{thm:main} together with Theorem \ref{thm:curScale} imply that in the scaling limit given by (\ref{eq:critScale}) the derivative of the scaling rate function $I^{(2)}_{\delta\rho}$ is given by the difference in the current out of the condensate and the average current in the background. 
This is made precise in the following result, which is a rigorous version of the current matching argument of Section \ref{sec:heur}.
\begin{corollary}[Current matching]
  \label{cor:curMatch}
  \begin{align}
    \label{eq:curMatch}
    \partial_{\delta\rhobg}I^{(2)}_{\delta \rho}(\delta \rhobg)=\frac{\delta\rhobg}{\sigma_c^2}-\frac{b}{(\delta\rho-\delta\rhobg)^\gamma}
  \end{align}
  and  the two terms on the right-hand side are exactly the limiting current curves from Theorem \ref{thm:curScale} above the critical point.

\end{corollary}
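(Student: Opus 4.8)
The plan is to read the corollary off the closed-form expression for $I^{(2)}_{\delta\rho}$ furnished by Theorem \ref{thm:main} by direct differentiation, and then to match the two resulting terms with the current scaling limits of Theorem \ref{thm:curScale}. First I would observe that in the formula
\[
I^{(2)}_{\delta\rho}(\delta\rhobg)=\frac{\delta\rhobg^{2}}{2\sigma_c^{2}}+\frac{b}{1-\gamma}(\delta\rho-\delta\rhobg)^{1-\gamma}-\inf_{r\in(0,\delta\rho)}\Bigl\{\frac{r^{2}}{2\sigma_c^{2}}+\frac{b}{1-\gamma}(\delta\rho-r)^{1-\gamma}\Bigr\}
\]
the infimum is a constant depending only on $\delta\rho$, so it drops out of $\partial_{\delta\rhobg}$ and no subtlety about differentiating an infimum arises. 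The remaining two terms are smooth in $\delta\rhobg$ on the open interval $(0,\delta\rho)$ — here one uses $1-\gamma>0$ so that $(\delta\rho-\delta\rhobg)^{1-\gamma}$ is $C^{1}$ away from the endpoint $\delta\rhobg=\delta\rho$, and the standing hypothesis $\frac{1}{\sigma_c^{2}}\delta\rhobg<\frac{b}{(1-\gamma)(\delta\rho-\delta\rhobg)^{\gamma}}$ keeps us safely in the interior of the range where the formula is valid. A one-line computation then gives $\partial_{\delta\rhobg}\bigl[\delta\rhobg^{2}/(2\sigma_c^{2})\bigr]=\delta\rhobg/\sigma_c^{2}$ and $\partial_{\delta\rhobg}\bigl[\tfrac{b}{1-\gamma}(\delta\rho-\delta\rhobg)^{1-\gamma}\bigr]=-b/(\delta\rho-\delta\rhobg)^{\gamma}$, whose sum is the right-hand side of \eq{curMatch}.

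For the identification in the second assertion I would simply quote Theorem \ref{thm:curScale}: in the scaling limit with $\alpha=\gamma/(1+\gamma)$ one has $\bigl(\Phi_m((N-m)/(L-1))-1\bigr)L^{\alpha}\to\delta\rhobg/\sigma_c^{2}$ and $\bigl(g(m)-1\bigr)L^{\alpha}\to b/(\delta\rho-\delta\rhobg)^{\gamma}$. Hence the first term of $\partial_{\delta\rhobg}I^{(2)}_{\delta\rho}$ is exactly the limiting rescaled excess of the background current $\Phi_m$ over the critical value $\phi_c=1$, and the (negated) second term is exactly the limiting rescaled excess of the exit rate $g(m)$ out of the condensate; their difference vanishes precisely when the two limiting current curves cross, which is the current-matching condition \eq{CurMatch}. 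Since those curves were only identified in Theorem \ref{thm:curScale} on the range $\frac{1}{\sigma_c^{2}}\delta\rhobg<\frac{b}{(1-\gamma)(\delta\rho-\delta\rhobg)^{\gamma}}$, the corollary is stated under the same restriction.

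The computation is entirely routine, and there is no genuine obstacle: because Theorem \ref{thm:main} already delivers the limiting object $I^{(2)}_{\delta\rho}$ in explicit closed form, one never has to interchange the scaling limit $L\to\infty$ with differentiation in $\delta\rhobg$, so the heuristic finite-difference identity \eq{finiteDiff} is upgraded to an exact statement about the limit for free. The only points requiring a moment's care are the $\delta\rhobg$-independence of the infimum term and the bookkeeping of the domain restriction inherited from Theorems \ref{thm:main} and \ref{thm:curScale}.
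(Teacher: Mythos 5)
Your proposal is correct and follows exactly the route the paper (implicitly) takes: differentiate the closed-form expression for $I^{(2)}_{\delta\rho}$ from Theorem \ref{thm:main}, noting that the infimum term is $\delta\rhobg$-independent, and identify the resulting two terms with the rescaled excess currents $(\Phi_m-1)L^{\alpha}$ and $(g(m)-1)L^{\alpha}$ from Theorem \ref{thm:curScale}. The paper itself gives no separate proof, presenting the corollary as an immediate consequence of the two theorems, which is precisely what you have spelled out.
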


This implies that
  \begin{align}
%    \label{eq:curMatch}
    \partial_{\delta\rhobg}I^{(2)}_{\delta \rho}(\delta \rhobg)\to -\infty \quad\mbox{as }\delta\rhobg\to\delta\rho\ ,
  \end{align}
so that $I^{(2)}_{\delta \rho}$ always exhibits a boundary minimum at $\delta\rhobg =\delta\rho$. 

Depending on the value of $\delta\rho$ the rescaled rate function has one of three qualitative forms characterised by the number of its extreme points, or equivalently the number of roots of (\ref{eq:curMatch}), and the position of the global minimum. 
With the threshold
\begin{align}
  c_0(\gamma,b)=\frac{1+\gamma}{\gamma}\big(\sigma_c^2 \gamma b\big)^{1/(1+\gamma )} \ ,
\end{align}
\eq{curMatch} has no real roots for $\delta\rho<c_0$, exactly one for $\delta\rho=c_0$, and two for $\delta\rho>c_0$. 
The latter correspond to a local minimum at $\delta\rhobg =r_0\in (0,\delta\rho)$ and a local maximum at $\delta\rhobg =r^*$, with $r_0 <r^*$ (cf.\ \fig{limit} on the right). 
As usual, minima of $I^{(2)}_{\delta \rho}$ correspond to metastable phases, and the depth of the local minimum at $r_0$ as compared to the one at the boundary determines which of the phases is stable (i.e. corresponds to the global minimum).
So $c_0$ marks the threshold above which (for $\delta\rho>c_0$) there exists a metastable condensed phase. 
Stability of the phases changes when both minima have the same depth, the density where this is the case is given by $\rho_{trans} (L)$ introduced in Section 3. 
In the scaling limit (\ref{eq:critScale}) this behaves as
\begin{align*}
  \rho_{trans} (L)=\rho_c +\delta\rho_{trans} L^{-\alpha} +o(L^{-\alpha} )\ ,
\end{align*}
and from Theorem \ref{thm:main} and Corollary \ref{cor:curMatch} we get the explicit expression
\begin{align}
  \delta\rhotrans = \sigma_c^{\frac{2}{1+\gamma}}(1+\gamma)(2\gamma)^{-\frac{\gamma}{1+\gamma}}\left( \frac{b}{1-\gamma}\right)^{\frac{1}{1+\gamma}}\ .
\end{align}
%This is in accordance with the results in \cite{ArmendarizZero}. 
Note that both $c_0$ and $\delta\rho_{trans}$ are increasing with $\gamma$ and $b$. Their ratio simplifies to
\begin{align}
  \delta\rhotrans /c_0 =\frac12\Big(\frac{2}{1-\gamma}\Big)^{1/(1+\gamma)}\ ,
\end{align}
which also increases monotonically with $\gamma$.
In particular, $\delta\rho_{trans} >c_0$ for all $\gamma >0$ which implies existence of an extended metastability region.

%There exists a threshold $c_0(\gamma,b)$ depending on $\gamma$ and $b$ such that for $\delta\rhobg<c_0(\gamma,b)$ \eq{curMatch} has no real roots, for $\delta\rhobg=c_0(\gamma,b)$ there is exactly one root and for $\delta\rhobg>c_0(\gamma,b)$ there exists two real roots.
% We denote the positive real roots of \eq{curMatch} if they exist by $x_0(\delta\rho)\leq x_1(\delta\rho)$.
% For small $\delta\rho<c_0(\gamma,b)$,  $I^{(2)}_{\delta \rho}(\delta\rhobg)$ is monotonic decreasing in $\rhobg$ to a boundary minimum at $\delta\rhobg=\delta\rho$.
% This corresponds to a stable fluid state above the critical density.
% If $\delta\rho>c_0(\gamma,b)$ then  $I^{(2)}_{\delta \rho}(\delta\rhobg)$ has two local minima.
% A boundary minimum at $\delta\rhobg=\delta\rho$ corresponding to fluid configurations and a minimum at $x_0(\delta\rho)$ corresponding to condensed configurations.
% The second root of \eq{curMatch} $x_1(\delta\rho)$ corresponds to the local maximum between these two local mimima.
% There exists a transition point $\delta\rhotrans(\gamma,b)$ such that for $I^{(2)}_{\delta\rhotrans}(x_0)=I^{(2)}_{\delta\rhotrans}(\delta\rho) = 0 $.
%We summarise the three cases as follows (See \fig{limit}),
We can now summarise the critical behaviour of the system at scale (\ref{eq:critScale}) in three cases, which are also illustrated in \fig{limit} (right).
\begin{case}[$\delta\rho<c_0$]
  There exists a unique minimum of the scaling rate function at $\delta\rhobg =\delta\rho$,
  \begin{align*}
    I^{(2)}_{\delta \rho}(\delta\rho) = 0\ ,
  \end{align*}
  corresponding to a stable fluid phase. % equilibrium macro-state which is fluid  above $\rho_c$.
  The most likely background density under the canonical measure is $\SLbg/(L-1)=\rho_c+\delta\rho L^{-\alpha}+o(L^{-\alpha})$.
\end{case}
\begin{case}[$c_0<\delta\rho<\delta\rhotrans$]
  There are two local minima of the scaling rate function
  \begin{align*}
    I^{(2)}_{\delta \rho}(r_0)>I^{(2)}_{\delta \rho}(\delta\rho) = 0\ ,
  \end{align*}
  corresponding to a stable fluid phase with $\delta\rhobg =\delta\rho$ and a metastable condensed phase with $\delta\rhobg =r_0$.
  %corresponding to an equilibrium macro-state which is fluid above $\rho_c$ and a metastable condensed phase.
  The most likely background density under the canonical measure is still $\SLbg/(L-1)=\rho_c+\delta\rho L^{-\alpha}+o(L^{-\alpha})$.
\end{case}
\begin{case}[$\delta\rho > \delta\rhotrans$]
  There are two local minima of the scaling rate function
  \begin{align*}
    I^{(2)}_{\delta \rho}(\delta\rho) > I^{(2)}_{\delta \rho}(r_0)  = 0\ ,
  \end{align*}
  corresponding to a metastable fluid phase with $\delta\rhobg =\delta\rho$ and a stable condensed phase with $\delta\rhobg =r_0$.
  % corresponding to an equilibrium macro-state which is condensed and with a background density above $\rho_c$.
  The most likely background density under the canonical measure is now $\SLbg/(L-1)=\rho_c+r_0 L^{-\alpha}+o(L^{-\alpha})$ and a finite fraction of the excess mass condenses on a single lattice site with $m=(\delta\rho-r_0)L^{1-\alpha}+o(L^{1-\alpha})$.
  % The most likely background density under the canonical measure is now $\SLbg/(L-1)=\rho_c+x_0 L^{-\alpha}+o(L^{-\alpha})$ and a finite fraction of the excess mass condenses on a single lattice site with $m=(\delta\rho-x_0)L^{1-\alpha}+o(L^{1-\alpha})$, where $x_0$ is given by current matching \eq{curMatch}.
\end{case}

The value of $\delta\rhotrans$, and for $\delta\rho > \delta\rhotrans$ the position of the global minimum, agree with recent results on the distribution of the maximum at the critical scale \cite{ArmendarizZero}.
% The value of $\delta\rhotrans$ is given explicitly in \cite{ArmendarizZero},
% \begin{align*}
%   \delta\rhotrans = \sigma_c^{\frac{2}{1+\gamma}}(1+\gamma)(2\gamma)^{-\frac{\gamma}{1+\gamma}}\left( \frac{b}{1-\gamma}\right)^{\frac{1}{1+\gamma}}.
% \end{align*}
% For $\delta\rho > \delta\rhotrans$ the position of the global minimum also agrees with recent results on the distribution of the maximum at the critical scale \cite{ArmendarizZero}.
The expected value of the maximum site occupation (equivalently background density) under the canonical measure can be interpreted in terms of the stable solution of the current matching argument as described above.

\begin{figure}[t]
  \centering
  \includegraphics[width=0.48\textwidth]{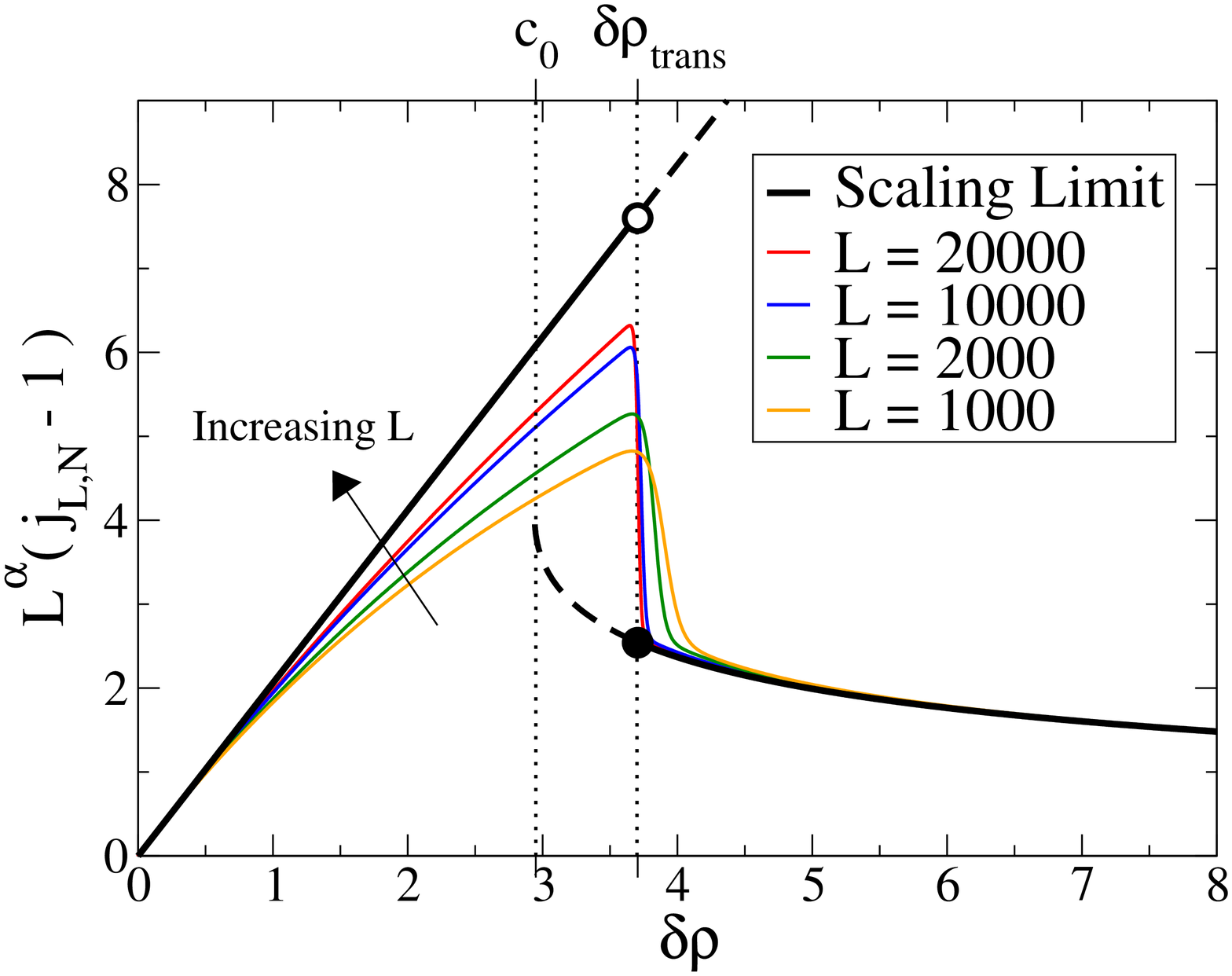}
  \includegraphics[width=0.48\textwidth]{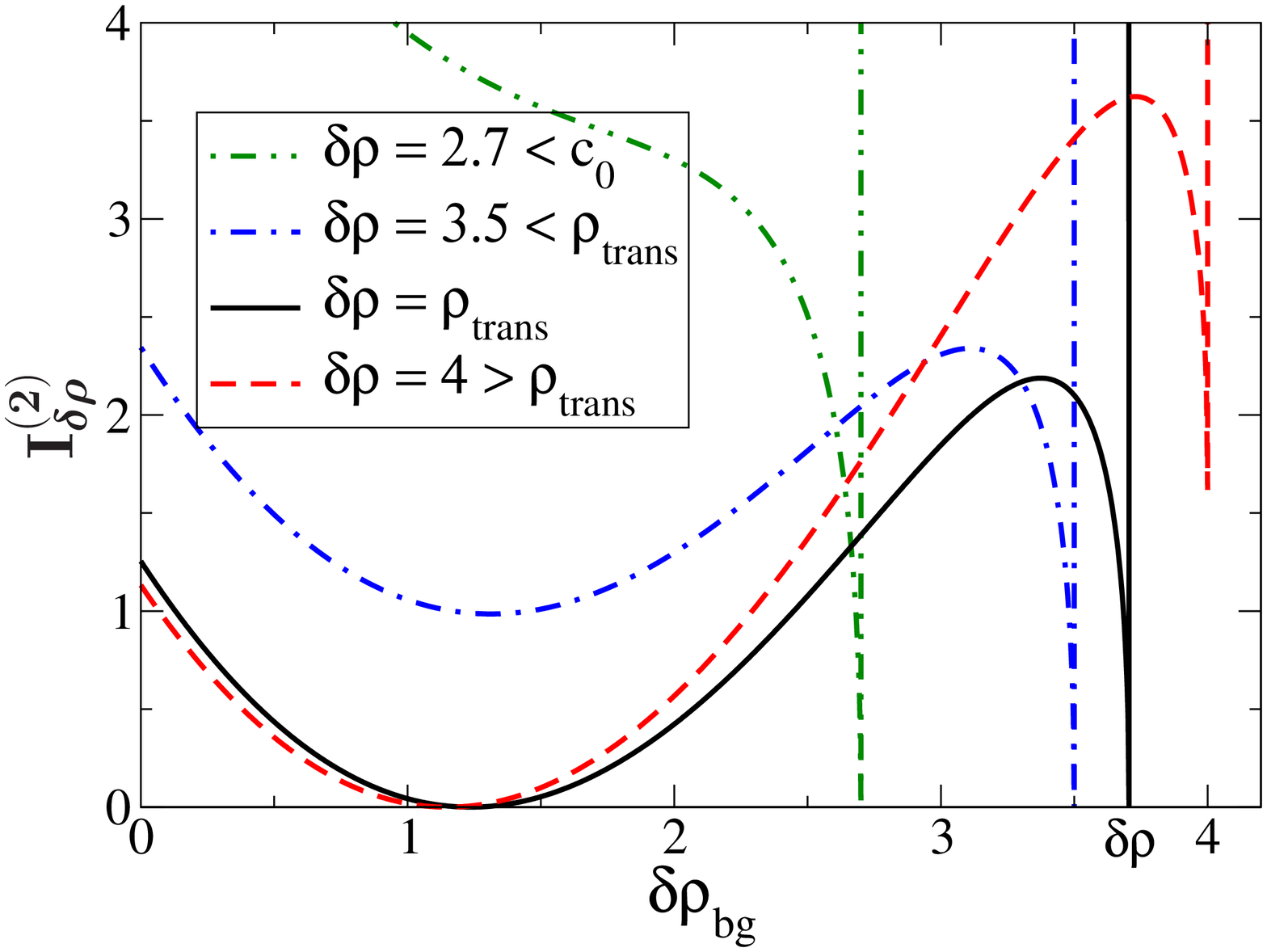}
  \caption{\label{fig:limit} Scaling limit for $\gamma=0.5$ and $b=4$. Left: The thick black line shows the scaling limit result including metastable branches (dashed). Thin coloured lines show the rescaled canonical currents from exact numerics, slowly approaching the scaling limit as $L$ increases.  The transition point is given by $\delta\rhotrans$. 
Right: $I^{(2)}_{\delta\rho}$ for various values of $\delta\rho$. Solid black corresponds to $\delta\rho=\delta\rhotrans$ where the depth of the two local minima are equal. The green dashed-dot-dot curve shows $I^{(2)}_{\delta\rho}$ for $\delta\rho<c_0$, blue dot-dashed curve corresponds to $c_0<\delta\rho<\delta\rhotrans$ and red dashed $\delta\rho>\delta\rhotrans$.
%The plot also show $L^{\beta}I^{L}_{\rho^L}(\rhobg^{L})$ for various system sizes converging as $L\to\infty$ to the scaling limit.
}
\end{figure}

\begin{corollary}[Canonical current overshoot]
  \label{cor:canCur}
  %In the scaling limit \eq{critScale},
  Under the conditions of Theorem \ref{thm:main}, for $\delta\rho<\delta\rhotrans$
  \begin{align}
    j_{L,N} - 1 = L^{-\alpha}\frac{\delta\rho}{\sigma_c^2}+o(1)
  \end{align}
  and for $\delta\rho>\delta\rhotrans$
  \begin{align}
    j_{L,N} - 1 = \frac{b L^{-\alpha}}{\left(\delta\rho-r_0\right)^{\gamma}} +o(1).
  \end{align}

\end{corollary}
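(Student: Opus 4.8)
The plan is to evaluate the ratio $j_{L,N}=Z(L,N-1)/Z(L,N)$ by decomposing both partition functions according to the size $m$ of the maximally occupied site. Let $W(L,N,m)=Q(L,N,m)-Q(L,N,m-1)$ be the total weight of configurations with $M_L=m$, so that $Z(L,N)=\sum_{m\ge 0}W(L,N,m)$ and $\pi_{L,N}[M_L=m]=W(L,N,m)/Z(L,N)$. Suppressing configurations in which the maximum is attained at more than one site (these carry weight exponentially small in $L^\beta$, uniformly for $m$ in the bulk of the $M_L$-distribution under the hypothesis of Theorem \ref{thm:main}), one has $W(L,N,m)=(1+o(L^{-\alpha}))\,L\,w(m)\,Q(L-1,N-m,m)$ and likewise $W(L,N-1,m)=(1+o(L^{-\alpha}))\,L\,w(m)\,Q(L-1,N-1-m,m)$.

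First I would cancel the common factor $w(m)Q(L-1,N-m,m)$ in numerator and denominator. Since $Q(L-1,N-1-m,m)=c_m\,Q(L-1,N-m,m)$ with $c_m:=Q(L-1,N-1-m,m)/Q(L-1,N-m,m)$, and since $0\le c_m\le 1+b$ while $\pi_{L,N}$ assigns exponentially small mass to atypical $m$, this gives $j_{L,N}=\E{\pi_{L,N}}{c_{M_L}}+o(L^{-\alpha})$. Next I would identify $c_m$ with a fugacity: writing $Q(L-1,K,m)=z_m(\phi)^{L-1}\phi^{-K}\nu_{\phi,m}^{L-1}[S_{L-1}=K]$ and choosing $\phi=\Phi_m((N-m)/(L-1))$ so that $(L-1)R_m(\phi)=N-m$, we have $c_m=\phi\cdot\nu_{\phi,m}^{L-1}[S_{L-1}=N-1-m]/\nu_{\phi,m}^{L-1}[S_{L-1}=N-m]$; a local limit theorem for triangular arrays with a quantitative error bound (applicable since, by Lemma \ref{convMoms}, the moments of $\nu_{\phi,m}$ are uniformly bounded and converge to those of $\nu_{\phi_c}$ under the Theorem \ref{thm:main} hypothesis) evaluates this ratio of local probabilities at neighbouring points as $1+O(L^{-1/2})$, whence $c_m=\Phi_m\big(\tfrac{N-m}{L-1}\big)\big(1+O(L^{-1/2})\big)$. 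By Theorem \ref{thm:curScale} and $\alpha=\tfrac{\gamma}{1+\gamma}<\tfrac12$ this is $c_m=1+\tfrac{\delta\rhobg}{\sigma_c^2}L^{-\alpha}+o(L^{-\alpha})$, uniformly over $m$ with $\tfrac{N-m}{L-1}=\rho_c+\delta\rhobg L^{-\alpha}+o(L^{-\alpha})$.

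It then remains to take the expectation. By Theorem \ref{thm:main} (Cases 1--3) the scaled background density $\delta\rho-M_L L^{\alpha-1}$ concentrates under $\pi_{L,N}$ at $\delta\rho$ when $\delta\rho<\delta\rhotrans$ and at $r_0$ when $\delta\rho>\delta\rhotrans$, so $\E{\pi_{L,N}}{c_{M_L}}=1+\tfrac{\delta\rho}{\sigma_c^2}L^{-\alpha}+o(L^{-\alpha})$ in the first case and $1+\tfrac{r_0}{\sigma_c^2}L^{-\alpha}+o(L^{-\alpha})$ in the second (the contribution of atypical $m$ is $O(e^{-cL^\beta})=o(L^{-\alpha})$ since $c_{M_L}$ is bounded). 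This gives $j_{L,N}=1+\tfrac{\delta\rho}{\sigma_c^2}L^{-\alpha}+o(L^{-\alpha})$ for $\delta\rho<\delta\rhotrans$, and for $\delta\rho>\delta\rhotrans$ we rewrite $\tfrac{r_0}{\sigma_c^2}=\tfrac{b}{(\delta\rho-r_0)^\gamma}$ using Corollary \ref{cor:curMatch} (since $r_0$ is the interior zero of $\partial_{\delta\rhobg}I^{(2)}_{\delta\rho}$), obtaining the stated expression.

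The main obstacle is making all the ``$(1+o(L^{-\alpha}))$'' and ``$o(L^{-\alpha})$'' estimates uniform in $m$ over the bulk of the $M_L$-distribution: this requires the local limit theorem of \cite{Daviselementary} for triangular arrays in a quantitative form with error $O(L^{-1/2})$ (available because the relevant single-site weights have uniformly bounded moments, by Lemma \ref{convMoms}), together with a large-deviation upper bound accompanying the pointwise convergence in Theorem \ref{thm:main}, so that the atypical values of $m$ --- in particular the region where a second sub-condensate may form and Theorem \ref{thm:main} no longer applies --- can be discarded, carrying exponentially small probability and a uniformly bounded current.
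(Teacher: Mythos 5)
Your route differs from the paper's stated proof, which simply says to combine \eq{CanCur} with the asymptotic formula \eq{Zln} for $L^{-\beta}\log Z(L,N)$. Taken literally, \eq{Zln} carries an $o(1)$ error which, after multiplication by $L^\beta$, is far larger than the $L^{-\alpha}$ precision needed in $\log j_{L,N}=\log Z(L,N-1)-\log Z(L,N)$; the cancellation between $N$ and $N-1$ has to be exhibited \emph{before} the logarithms are separated, which is exactly what your decomposition over $M_L=m$ and cancellation of the common factor $w(m)Q(L-1,N-m,m)$ accomplishes. In that sense your write-up makes explicit the mechanism that the paper's one-line proof leaves implicit, and the key ingredients --- choosing $\phi=\Phi_m\bigl(\frac{N-m}{L-1}\bigr)$, the local limit theorem of \cite{Daviselementary} applied at two neighbouring points, the fact that $\alpha=\gamma/(1+\gamma)<1/2$, the moment control of Lemma \ref{convMoms}, and the identity $\frac{r_0}{\sigma_c^2}=\frac{b}{(\delta\rho-r_0)^\gamma}$ at the condensed minimum from Corollary \ref{cor:curMatch} --- are all the right ones.

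There is, however, one genuine gap. You assert that configurations in which the maximum is attained at more than one site ``carry weight exponentially small in $L^\beta$, uniformly for $m$ in the bulk of the $M_L$-distribution,'' so that $W(L,N,m)=(1+o(L^{-\alpha}))\,L\,w(m)\,Q(L-1,N-m,m)$. This is correct when the typical $m$ is of order $L^{1-\alpha}$, i.e.\ in the condensed case $\delta\rho>\delta\rhotrans$, where the multiplicity correction is $1-O\bigl(L\,e^{-\frac{b}{1-\gamma}m^{1-\gamma}}\bigr)$ and $m^{1-\gamma}\sim L^{\beta}$. But in the fluid case $\delta\rho<\delta\rhotrans$, the $M_L$-distribution concentrates on $m$ of order $(\log L)^{1/(1-\gamma)}$, for which $L\,e^{-\frac{b}{1-\gamma}m^{1-\gamma}}$ is only $O\bigl((\log L)^{-\gamma/(1-\gamma)}\bigr)$: it vanishes, but far more slowly than $L^{-\alpha}$, and your passage from $\sum_m W(L,N-1,m)/\sum_m W(L,N,m)$ to $\E{\pi_{L,N}}{c_{M_L}}+o(L^{-\alpha})$ does not follow as stated. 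Two repairs are available. Either show directly that the multiplicity corrections for $N$ and $N-1$ coincide to relative order $O(1/L)$, so they cancel in the ratio $W(L,N-1,m)/W(L,N,m)$ even when individually they are not small; or --- more cleanly --- do not decompose the fluid part over individual values of $M_L$ at all: fix a cutoff $\tilde{m}=\eta L^{1-\alpha}$ with $\eta$ small, write $Q(L,N,\tilde{m})=z_{\tilde{m}}(\phi)^{L}\phi^{-N}\nu_{\phi,\tilde{m}}^{L}[S_L=N]$, evaluate $Q(L,N-1,\tilde{m})/Q(L,N,\tilde{m})=\Phi_{\tilde{m}}(N/L)(1+O(1/L))$ by the local limit theorem, and use Theorem \ref{thm:main} to bound the contribution of $\{M_L>\tilde{m}\}$ by $e^{-cL^\beta}$ in the fluid regime (and symmetrically with the roles exchanged in the condensed regime, where your original argument already applies).
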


\begin{proof}
  Follows directly by expressing $j_{L,N}$ in terms of the ratio of canonical partition functions \eq{CanCur} and applying \eq{Zln}.
\qed
\end{proof}

This result is illustrated in \fig{limit} (left). 
At the transition point $\delta\rho = \delta\rhotrans$ we must examine higher order terms to demonstrate that the system will asymptotically be in a condensed phase, see \cite{ArmendarizZero} and section \ref{sec:dynamics}.

\section{Estimating the Lifetime of Metastable Phases}
\label{sec:dynamics}
%\note{This section needs some extra work.}

Having established the existence of metastable phases we can use our previous results to get an estimate of their lifetime.
Previous studies include results on the dynamics of fluctuations in a finite system \cite{GuptaFinite-size} and the dynamics of the condensate \cite{GodrecheDynamics,GodrecheDynamicsa,GrosskinskyCondensation}.
Our approach here follows mostly the one in \cite{GrosskinskyDiscontinuous} where metastability in a different zero-range process has been studied, and \cite{GodrecheDynamicsa} where a random walk argument was used to find the characteristic time of the motion of the condensate.

In this section we consider the dynamics in the scaling limit defined by \eq{critScale}. In Section 5.2 we have seen that for $\delta\rho >c_0$ the system exhibits two metastable phases,  we assume in the following that we are in this region. 
So for sufficiently large $L$ we know $I_{L,N}(m)$ exhibits a local maximum between the fluid and condensed minima. 
Let $m^*$ be the local maximum of $I_{L,N}$ which is well approximated by the closest integer to the largest root of \eq{CurMatch} (see Section \ref{sec:heur}).
In the scaling limit $\frac{N-m^*}{L-1}=\rho_c +r^* L^{-\alpha}$ (see Section \ref{sec:rigour}).
We define the metastable fluid (condensed) phase as all states for which the maximum site occupation is below (above) $m^{*}$, i.e. 
\begin{align}
  \label{eq:quasi}
  \pi^{\textrm{fluid}}_{L,N} &:= \pi_{L,N}[\cdot | M_L(\feta)\leq m^{*}] \nonumber\\
  \pi^{\textrm{cond}}_{L,N} &:= \pi_{L,N}[\cdot | M_L(\feta) > m^{*}]\ .
\end{align}
Since the process spends very little time close to the maximum of $I_{L,N}$, the precise choice of $m^*$ is not relevant for results in the scaling limit (cf. also \cite{BeltranMetastability} for a slightly different approach).
The lifetimes of the two metastable phases can be expressed in terms of the following hitting time,
\begin{align*}
  \tau_L := \inf\left\{t\geq 0 : M_L(\feta(t)) = m^*\right\}\ .
\end{align*}
%The lifetime of the two metastable phases are given by the expected value of $\tau_L$ with respect to the measure on path space induced by the dynamics \eq{infgen} with initial distribution given by $\pi^{\textrm{fluid}}_{N,L}$ or $\pi^{\textrm{cond}}_{N,L}$,
We take
\begin{align}
  T_{\textrm{fluid}} := 2\E{\pi^{\textrm{fluid}}_{L,N}e^{\Lc t}}{\tau_L}\quad\mbox{and}\quad
  T_{\textrm{cond}} := 2\E{\pi^{\textrm{cond}}_{L,N}e^{\Lc t}}{\tau_L}\ ,
\end{align}
%The mean lifetime of the fluid under the dynamics \eq{infgen} where the initial configuration is ,
where the expectation is with respect to the dynamics \eq{infgen} with initial distribution given by $\pi^{\textrm{fluid}}_{N,L}$ or $\pi^{\textrm{cond}}_{N,L}$. 
The factor of 2 comes from the fact that once $m^*$ is reached the process can still return to the same metastable phase with probability $1/2$.

%The ratio of lifetimes follows directly from ergodicity and by assuming the number of particles in the condensate evolves according to Markovian dynamics,
By ergodicity of the process $\big(\feta (t):t\geq 0\big)$ on $\Omega_{L,N}$ the ratio of lifetimes is directly related to the stationary distribution,
% assumption on the number of particles in the condensate,
\begin{align}
  \label{eq:rationTauExact}
  \frac{T_{\textrm{fluid}}}{T_{\textrm{cond}}} = \frac{\pi_{L,N} [M_L(\feta)\leq m^{*}]}{\pi_{L,N} [M_L(\feta)> m^{*}]} = \frac{\sum_{k=0}^{m^*}e^{-L I_{L,N}(k)}}{\sum_{k=m^*+1}^{N}e^{-L I_{L,N}(k)}}\ .
\end{align}
Applying the scaling limit result of  Theorem \ref{thm:main}, we get for large $L$
\begin{align}
  \label{eq:rationTau}
  \frac{T_{\textrm{fluid}}}{T_{\textrm{cond}}} &\simeq \int_{r^*}^{\delta\rho}e^{-L^\beta I_{\delta\rho}^{(2)}(x)}\ud x \left( \int_{0}^{r^*}e^{-L^\beta I_{\delta\rho}^{(2)}(x)}\ud x \right)^{-1}\nonumber\\ 
  &\simeq\frac{\Gamma(1+\frac{1}{1-\gamma})\left(\frac{1-\gamma}{b}\right)^{\frac{1}{1-\gamma}} \sqrt{\partial^2_{x}I_{\delta\rho}^{(2)}(r_0)}}{{\sqrt{2\pi L}}}\, e^{-L^{\beta}(I_{\delta\rho}^{(2)}(\delta\rho)-I_{\delta\rho}^{(2)}(r_0))}\ ,
\end{align}
where the second line follows by a saddle point approximation of the denominator at the local minimum $r_0$ that corresponds to condensed configurations, and by expanding $I_{\delta\rho}^{(2)}(x)$ in the numerator at the boundary minimum (corresponding to fluid configurations), keeping the leading order singular term.
%The approximation \eq{rationTauExact} was found to be in good agreement with lifetimes measured from Monte Carlo simulations (equal within the error of the measured lifetimes).
%\note{more data required here?}.
To leading order in the exponent this result implies that the ratio of the lifetimes is given by the relative depth of the minima, as expected. 
The lower order term $\sqrt{L}$ in the denominator implies that at the transition density $\delta\rhotrans$ in the scaling limit the condensed phase is stable and the fluid phase is metastable, in accordance with results in \cite{ArmendarizZero}.
However the scaling with $L$ is difficult to verify in Monte Carlo simulations, since system sizes have to be relatively small (at most of the order of $L=1000$) to get good statistics on switching times. In this regime higher order finite size effects still play a role and can affect the location of $\rhotrans (L)$ and therefore the relative depth of the minima of $I_{L,N}$.
%Applying this results to a large finite system ($L=1000$) significantly 

\begin{table}[t]
 \centering
 \caption{\label{tab:1} Lifetime of metastable phases for $L=1000$ and $N=695$, with $\gamma=0.5$ and $b=4$. Using approximation (\ref{eq:lifetimes}) compared to measurements from Monte Carlo simulations on a fully connected lattice (MF), a one dimensional lattice with $p=1$ and with $p=3/4$.}
 \begin{tabular}[t]{l || c| c| c| c}
   \hline\noalign{\smallskip}
   & Approx. (\ref{eq:lifetimes}) & MF & 1D $p=1$ & 1D $p=3/4$ \\
   \noalign{\smallskip}\hline\hline\noalign{\smallskip}
   $T_{\textrm{fluid}}$ & $1.61\times10^{5}$ & $1.65\times10^{5}$& $1.21\times10^{5}$& $2.05\times10^{5}$ \\
   $T_{\textrm{cond}}$  & $1.51\times10^{5}$ & $1.52\times10^{5}$& $1.07\times10^{5}$& $1.94\times10^{5}$\\
 \end{tabular}
\end{table}

We estimate the two lifetimes by approximating the number of particles in the condensate by a continuous time random walk.
Since it is difficult to validate the scaling with $L$, as discussed above, we will demonstrate the validity of our approach by estimating the lifetimes directly for finite systems.
The actual lifetimes are not only related to the potential barrier given by the maximum of $I_{L,N}$, but also depend on the underlying dynamics.
The occupation of the maximum $\big( M_L (\feta (t)):t\geq 0\big)$ is a non-Markovian, stationary, ergodic process with state space $\Omega^* =\{ \lceil N/L \rceil,\lceil N/L \rceil + 1,\ldots,m^*,m^*+1,\ldots,N\}$ and stationary distribution $\pi^* (m) = e^{-L\, I_{L,N} (m)}$.
Since the process exhibits only single steps, it can be approximated by a continuous time random walk on $\Omega^*$ (the validity of this Markovian assumption is discussed later) where a particle leaves the maximum ($m\to m-1$) with rate $g(m)$. The rates corresponding to  $m\to m+1$ are then fixed by the stationary distribution $\pi^*$. 
In the scaling limit this implies that particles enter the condensate with rate given by the current in the background, see Corollary \ref{cor:curMatch}.
The lifetime of the metastable fluid phase is approximated by the mean first passage time of the random walk starting in state $\lceil N/L \rceil$ to reach $m^{*}$.
For the metastable condensed phase it is given by the mean first passage time starting at $N$, and both have a factor of 2 in front due to the possibility of reaching $m^*$ and not actually switching phase.
These can be calculated using standard techniques (see e.g. \cite{MurthyMean} for the discrete time analog),
\begin{align}
  \label{eq:lifetimes}
  T_{\textrm{fluid}} \approx &\ 2\sum_{i=\lceil N/L \rceil+1}^{m^{*}}\frac{\pi^*(i-1)}{g(i)\pi^*(i)}+ 2\sum_{i=\lceil N/L \rceil+2}^{m^{*}}\frac{1}{g(i)\pi^*(i)}\sum_{j=\lceil N/L \rceil}^{i-2}\pi^*(j) \nonumber\\
  T_{\textrm{cond}} \approx &\ 2\sum_{i=m^*+1}^{N}\frac{1}{g(i)} + 2\sum_{i=m^*+1}^{N}\frac{1}{g(i)\pi^*(i)}\sum_{j=i-1}^{N}\pi^*(j).
\end{align}
%In particular we expect this assumption to hold if the jump process on an infinite lattice is transient.
%This is the case for the totally asymmetric processes on a one dimensional lattice and for all jump processes if the dimension of the lattice is greater than two.
These two approximations can be calculated numerically for reasonably large systems (up to $L\approx 4000$) since we can calculate $\pi^* (m) = e^{-L\, I_{L,N} (m)}$ exactly using \eq{CanMax}.
%The results are extremely close to the lifetimes measured from Monte Carlo simulations (equal within the error of the measured lifetimes).
%\note{quantify?}.
Applying the scaling limit result and using saddle point approximations we recover an Arrhenius estimate with a prefactor for the lifetime of the two metastable phases,
\begin{align}
  T_{\textrm{fluid}} &\approx 2\int_{r^*}^{\delta\rho}e^{L^{\beta}I_{\delta\rho}^{(2)}(x)}\int_{x}^{\delta\rho}e^{-L^{\beta}I_{\delta\rho}^{(2)}(y)}\ud y \ud x \nonumber\\
  &\approx \Gamma\left(1+\frac{1}{1-\gamma}\right)\left( \frac{b}{1-\gamma}\right)^{\frac{-1}{1-\gamma}}\sqrt{\frac{2\pi L}{\partial^2_{r}I_{\delta\rho}^{(2)}(r^*)}}\, e^{L^{\beta}(I_{\delta\rho}^{(2)}(r^*)-I_{\delta\rho}^{(2)}(\delta\rho))}\nonumber\\
  T_{\textrm{cond}} &\approx 2\int_{0}^{r^*}e^{L^{\beta}I_{\delta\rho}^{(2)}(x)}\int_{0}^{x}e^{-L^{\beta}I_{\delta\rho}^{(2)}(y)}\ud y \ud x \nonumber\\ &\approx\frac{2\pi L}{\sqrt{\partial^2_{r}I_{\delta\rho}^{(2)}(r^*)\partial^2_{r}I_{\delta\rho}^{(2)}(r_0)}}\, e^{L^{\beta}(I_{\delta\rho}^{(2)}(r^*)-I_{\delta\rho}^{(2)}(r_0))}.
\end{align}
Here $r^*$ is the location of the local maximum of the rate function, $r_0$ is the local minimum corresponding to condensed configurations and $\delta\rho$ is the boundary minimum corresponding to fluid configurations.
To leading order the lifetime of the metastable fluid (condensed) phase is given by the height of the local maximum of the rate function (potential barrier) above the fluid (condensed) minimum.
This result agrees with the ratio of the lifetimes already discussed.
However as with the previous results the asymptotic form is difficult to validate due to higher order finite size effects, the exact expression \eq{lifetimes} is therefore more appropriate for relevant system sizes.

%In one dimension with partial asymmetry the argument above underestimates the lifetimes due to the probability of a particle that has escaped from the con
%densate returning to the condensate before equilibrating in the bulk.
We expect the above random walk approximation to be accurate if particles exiting the maximum equilibrate in the bulk before returning to the condensate. 
This condition is best fulfilled for a fully connected lattice (mean-field geometry) and we expect it to be a reasonably good approximation for one dimensional totally asymmetric systems since particles have to pass through the whole fluid bulk before returning. 
For partial asymmetry and in higher dimensions a return without penetrating the fluid bulk is possible. 
To a first approximation the lifetimes have to be multiplied by the inverse probability of the event that particles escape into the bulk, since only such particles have a chance to equilibrate and contribute on the right scale. 
This pre-factor can be estimated using another random walk argument.
If a particle jumps to the right with probability $p$ and to the left with probability $q$ where $p+q=1$ and $p\neq 1/2$ then the probability of a particle reaching some macroscopic distance before returning to the condensate is asymptotically $|p-q|$.
We therefore expect the lifetime to increase by a factor of $1/|p-q|$ for partial asymmetry.
As a special case, for symmetric systems in one dimension this leads to an increase of lifetimes by a factor of $L$ by the same argument. 
For a detailed investigation and validation of this Markovian ansatz see \cite{GodrecheDynamicsa}. 
We find that this argument gives a good approximation for fully connected lattices, and totally asymmetric jumps in one dimension, see Table \ref{tab:1}. % in one dimension and for general jump distributions in  dimensions greater than two. \note{can we add some plot or validation here at least some data?}
We observe that the one dimensional totally asymmetric case is a factor of approximately $1.4$ faster than the mean-field case, which is due to internal structure in the fluid background, and has also been observed and discussed in \cite{GodrecheDynamicsa}.
For partial asymmetry in one dimension with $p=3/4$ the process is slower than the totally asymmetric case by a factor of approximately $1.8$ supporting the arguments above which predict an increase by a factor of $2$.

%Breaks down between fluid min and boundary.
The characteristic time for the motion of the condensate in the thermodynamic limit, $N/L\to\rho >\rho_c$, can be approximated by considering \eq{critScale} with a sequence of $\delta\rho$ increasing like $L^{\alpha}$.
The corresponding rescaled location of the condensed minimum $r_0\to 0$, since the condensed phase has limiting background density $\rho_c$. 
Our rigorous results do not technically hold in this limit, however our estimates of the lifetimes of the metastable phases agree with the characteristic times found by Godreche and Luck \cite{GodrecheDynamicsa,GodrecheDynamics}.
Our results demonstrate that close to the transition point in finite systems (i.e. $\delta\rho\approx \delta\rhotrans$) the condensate typically moves via the system entering the fluid phase (cf.\ \fig{switch}), and therefore its new position is expected to be chosen uniformly at random on $\Lambda_L$.
For $\delta\rho$ large, however, the lifetime of the fluid phase becomes small and the condensate can re-locate whilst the system remains in the condensed phase, via the mechanism described in \cite{GodrecheDynamicsa}.
A sub-condensate starts to grow in the fluid bulk, and the potential barrier the process has to cross for condensate motion is associated with the probability of the excess mass shared equally between two sites. This is the relevant mechanism in the thermodynamic limit.
For finite systems one can estimate which of these re-location processes dominates, by considering the distribution of the second largest site %conditioned on the largest site occupation 
using the same techniques as in the previous section. % \ref{sec:rigour}.
It has recently been shown that the second mechanism via a sub-condensate can lead to a non-uniform relocation of the condensate \cite{BeltranMeta-stability}.

%\note{call transition point $\delta\rhotrans$ and $\rhotrans = \rhotrans$. Include a reference to Armederiz for the explicit form of $\delta\rhotrans$ and comment that this corresponds exactly with the point we find numerically}

\section{Discussion}
\label{sec:discuss}
We have shown that a prototypical class of zero-range processes, that are known to undergo a condensation transition, exhibit large finite size effects including a metastable switching phenomenon for a large range of system parameters. 
We have characterized this behaviour using the background density as an order parameter, by rigorously deriving a large deviation rate function for its distribution in an appropriate scaling limit, which shows a double well structure. 
These results agree with recent work on the zero-range process at the critical scale \cite{ArmendarizZero}, which we extend by establishing metastable fluid and condensed phases. 
Our methods give rise to a simple interpretation of these results in terms of a stationary current balance between the fluid background and the condensate.
All results presented here, except those in Section \ref{sec:dynamics}, concern properties of the canonical stationary distribution and are therefore independent of the geometry or dimension of the lattice, so long as it permits homogeneous stationary distributions.
%are independent of the geometry or dimension of the lattice.

Our results allow us to estimate the lifetime of the two metastable phases using a heuristic random walk argument.
The estimates agree with previous studies of the dynamics of the condensate in the thermodynamic limit \cite{GodrecheDynamicsa}.
We show that in the critical scaling limit the lifetime of the two phases is growing with system size (as a stretched exponential) and derive the appropriate Arrhenius law including prefactor.
We have also demonstrated that on finite systems the re-location dynamics of the condensate is dominated by switching to the metastable fluid phase for a large parameter range, rather than the mechanism via a second sub-condensate assumed in previous results, which is relevant in the thermodynamic limit. 

%These results account for the apparent metastability observed on large finite systems which can also lead to hysteresis typically observed in discontinuous phase transitions.
\begin{figure}[t]
  \centering
  \includegraphics[width=0.48\textwidth]{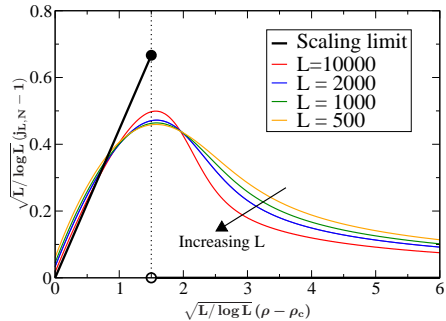}
  \caption{\label{fig:gammaOne}  The current overshoot in the scaling limit for $\gamma =1$ and $b=4$. The thick black line shows the scaling limit result and the rescaled canonical currents are shown for various system sizes by thin coloured lines, they slowly approaching the scaling limit as $L$ increases.  The transition point is given by $c_0(\gamma,b)$ which is given in \cite{ArmendarizZero}. }
\end{figure}

%From a theoretical perspective, 
The zero-range process provides an interesting example where the thermodynamic limit fails to capture essential parts of the dynamics, and contradicts the usual expectation that finite systems should behave in a smoother fashion than the limiting prediction. 
Although the aim of this paper is not to discuss particular applications in detail, the finite size effects described here are particularly important for a proper understanding of various real condensation phenomena.
To name two relevant examples, the condensing zero-range process has recently been applied as a simplified traffic model as well as to describe clustering phenomena in granular media \cite{KaupuzsZero-range,ToeroekAnalytic,MeerCompartmentalized,LevineTraffic}.
Both of these applications exhibit metastability phenomena, and typical system sizes are of order $10^2 - 10^3$, the region where metastability effects in our analysis are most relevant. %, and are indeed observed in real experiments. There are many more applications
Since our results hold for a large generic class of zero-range models, they suggest that the apparent metastability in many applications is not necessarily the result of a particular choice of the jump rates, but rather a generic finite size phenomenon. 
While the results in the scaling limit hold for all zero-range processes with the same tail behaviour as the rates (\ref{eq:rates}), the strength of the effect for relevant system sizes will of course depend on the details of the particular application. 
Traffic modelling is an example where metastability is particularly pronounced \cite{WilsonMechanisms}, and it is an interesting question to investigate in detail the applicability of the zero-range process and our results in that area.

%We demonstrate here that the metastability can occur as a result of the finite size of the system and may persist for all finite systems (on some appropriate scale).
%We have observed from numerics and Monte Carlo simulations that finite size effects can be very pronounced for a broad range of parameters considered.
%Our results fully describe the relevant leading order finite size effects above the critical density in terms of a rate function for the observed background density (equivalently maximum site occupation).
Using the methods discussed in this paper, a direct application of the findings in \cite{ArmendarizZero} allows us to extend our results on the current overshoot to zero-range processes with rates (\ref{eq:rates}) where $\gamma=1$ and $b>3$, see \fig{gammaOne}.
The critical scale in this case turns out to be of order $\sqrt{L/\log L}$, and the system also shows a discontinuous current overshoot in the scaling limit. However, as can be seen in \fig{gammaOne}, convergence is very slow and there are no metastable phases on the critical scale. 
Although metastability might occur on higher order scales, it will hardly be relevant in any finite size simulation or application. 
%In this case metastable phases do not exist in the scaling limit corresponding to simulations in which we do not observe any switching behaviour.

The scaling limit rate function in Theorem \ref{thm:main} could also be derived using a slightly modified version of the saddle point methods applied in \cite{EvansCondensation,EvansCanonical,MajumdarNature}. 
However, the direct approximation of the fluid and condensed metastable phases used in this paper seems more intuitive for our purpose, and can be made rigorous with less effort than a saddle point computation.

\section*{Acknowledgements}
The authors are grateful to Hugo Touchette, Gunter M.\ Sch{\"u}tz, Michalis Loulakis and Robin C.\ Ball for valuable discussions. This work was supported by the Engineering and Physical Sciences Research Council, UK, grant No. EP/E501311/1.

\appendix
\section*{Appendix}
\setcounter{section}{1}
%\note{without this stupid trick it looks even more stupid}

\subsection{A formula for $I_{L,N}$}
\label{app:deriv}

In \eq{IDef} we introduce $I_{L,N}(m)=-\frac{1}{L}\log \pi_{L,N}\left[ M_L = m \right]$ for the canonical probability that the maximum site occupation is $m$. The following upper bound over counts configurations in which more than one site contains $m$ particles
\begin{align}
  \pi_{L,N}\left[ M_L= m \right] \leq \frac{L w(m) \sum_{\feta\in\hat{X}}\prod_{x=1}^{L-1}w(\eta_x)}{Z(L,N)}, 
\end{align}
where $\hat{X} = \{ \feta : \eta_1,\ldots,\eta_{L-1}\leq m, \sum_{x=1}^{L-1}\eta_x=N-m\}$.
The following lower bound does not count any configurations in which more than one site contains $m$ particles,
\begin{align}
  \pi_{L,N}\left[ M_L= m \right] \geq \frac{L w(m) \sum_{\feta\in\check{X}}\prod_{x=1}^{L-1}w(\eta_x)}{Z(L,N)}, 
\end{align}
where $\check{X} = \{ \feta : \eta_1,\ldots,\eta_{L-1}\leq m-1, \sum_{x=1}^{L-1}\eta_x=N-m\}$.
It follows that 
\begin{align}
  \pi_{L,N}\left[ M_L= m \right] = \frac{L w(m) \sum_{\feta\in\hat{X}}\prod_{x=1}^{L-1}w(\eta_x)}{Z(L,N)}\left(1-O(Le^{-\frac{-b}{1-\gamma}m^{1-\gamma}})\right)
\end{align}
as $L,m\to\infty$.
We may write the last term in the numerator in terms of the cut-off grand canonical measure and get %letting $\rhobg=(N-m)/(L-1)$,
\begin{align}
  &\pi_{L,N}\left[ M_L= m \right] = \nonumber\\ 
  &=\frac{L w(m) z_{m}^{L{-}1}(\phi)\phi^{{-}(N{-}m)}\nu_{\phi,m}^{L{-}1}\left[ S_{L{-}1}{=} N-m \right]}{Z(L,N)}\left(1{-}O(Le^{-\frac{{-}b}{1{-}\gamma}m^{1{-}\gamma}})\right)\ ,
\end{align}
which holds for all $\phi \in (0,\infty)$ (cf.\ Section 4.1). Finally, taking logarithm,
\begin{align}
  \label{eq:canapprox}
 -\frac{1}{L}\log \pi_{L,N}\left[ M_L= m \right] =&  -\frac{(L-1)}{L}\Big(\log z_{m}(\phi)- \frac{N-m}{L-1}\log \phi \Big)+ \nonumber\\
  & -\frac{1}{L}\log w(m)  +\frac{1}{L}\log Z(L,N)- \frac{1}{L}\log L+ \nonumber\\
  &  - \frac{1}{L}\log \nu_{\phi,m}^{L-1}\left[S_{L-1}{=}N-m\right]  {+} O(e^{-\frac{-b}{1-\gamma}m^{1-\gamma}}).
\end{align}

\subsection{Convergence of Moments under the Cut-off Ensemble}
\begin{lemma}
\label{Taylor}
  Consider two sequences $m_n\in\N$, $\mu_n\in(0,\infty)$,
  such that $m_n\to\infty$, $\mu_n\to 0$ and $m_n^\gamma\mu_n\to C$ as $n\to\infty$ with $C\in[0,\frac{b}{1-\gamma})$. Then for each $i\in\N_{0}$
  \begin{align}
    \limsup_{n\to\infty}\sum_{k\leq m_n}k^i w(k) e^{\mu_n k} < \infty. \nonumber
  \end{align}

\end{lemma}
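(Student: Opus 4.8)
The plan is to control the sum $\sum_{k\le m_n} k^i w(k) e^{\mu_n k}$ by using the stretched-exponential asymptotics of the weights, $w(k)\sim e^{-\frac{b}{1-\gamma}k^{1-\gamma}}$ from \eq{AsymWk}, and showing that the exponent $\mu_n k - \frac{b}{1-\gamma}k^{1-\gamma}$ stays bounded uniformly in $k\le m_n$ and $n$, under the hypothesis $m_n^\gamma\mu_n\to C<\frac{b}{1-\gamma}$. First I would fix $\varepsilon>0$ small enough that $C+\varepsilon<\frac{b}{1-\gamma}$ and choose $n$ large enough that $\mu_n\le (C+\varepsilon)m_n^{-\gamma}$. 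Then for every $k$ with $1\le k\le m_n$ we have $\mu_n k\le (C+\varepsilon) m_n^{-\gamma} k$, and since $k\le m_n$ implies $m_n^{-\gamma}\le k^{-\gamma}$ is false in general, I instead write $m_n^{-\gamma}k = (k/m_n)\,m_n^{1-\gamma}\le m_n^{1-\gamma}$ only when comparing at the endpoint; the cleaner route is to compare directly with $k^{1-\gamma}$: since $k\le m_n$, $m_n^{-\gamma}k \le m_n^{-\gamma}k \cdot (m_n/k)^{\gamma} = k^{1-\gamma}(m_n/k)^{0}$ — this is circular, so instead use $m_n^{-\gamma} k \le k^{1-\gamma}\cdot (k/m_n)^{\gamma}\cdot (m_n/k)^{\gamma}$. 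The correct and simplest bound: for $k \le m_n$, $\mu_n k \le (C+\varepsilon) m_n^{-\gamma} k \le (C+\varepsilon) k^{1-\gamma}$, because $m_n^{-\gamma} k = k^{1-\gamma}(k/m_n)^{\gamma}\le k^{1-\gamma}$ as $k/m_n\le 1$ and $\gamma>0$.

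Given that bound, the summand satisfies (up to the constant implied by $\sim$ in \eq{AsymWk}, and a harmless polynomial prefactor absorbed into a slightly larger exponent or kept explicitly)
\begin{align}
  k^i w(k) e^{\mu_n k} \le \mathrm{const}\cdot k^i \exp\!\Big(-\tfrac{b}{1-\gamma}k^{1-\gamma} + (C+\varepsilon)k^{1-\gamma}\Big) = \mathrm{const}\cdot k^i\exp\!\Big(-\big(\tfrac{b}{1-\gamma}-C-\varepsilon\big)k^{1-\gamma}\Big). \nonumber
\end{align}
Setting $c:=\frac{b}{1-\gamma}-C-\varepsilon>0$, the right-hand side is summable over all $k\in\N$ independently of $n$, since $\sum_{k\ge 1} k^i e^{-c k^{1-\gamma}}<\infty$ (a stretched-exponential tail beats any polynomial). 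Hence $\sum_{k\le m_n}k^i w(k)e^{\mu_n k}\le \sum_{k\ge 1}\mathrm{const}\cdot k^i e^{-c k^{1-\gamma}}<\infty$ uniformly in $n$, which gives the claimed $\limsup$. A small technical point is that \eq{AsymWk} is only asymptotic, so I would split off finitely many initial terms (which contribute a bounded amount, uniformly in $n$ since $\mu_n\to 0$ and each $w(k)$ is fixed) and apply the bound $w(k)\le \mathrm{const}\cdot e^{-\frac{b}{1-\gamma}k^{1-\gamma}}$ only for $k\ge k_0$.

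I do not expect any serious obstacle here; the lemma is essentially a uniform-integrability-type estimate. The only place requiring care is the inequality $m_n^{-\gamma} k\le k^{1-\gamma}$ for $k\le m_n$ — valid precisely because $\gamma\in(0,1)$ so that $x\mapsto x^{-\gamma}$ is decreasing — together with keeping the implied constant in \eq{AsymWk} uniform, which is handled by the finite-initial-segment split above. The role of the strict inequality $C<\frac{b}{1-\gamma}$ is exactly to leave room for the $\varepsilon$ and hence a strictly positive decay rate $c$; at $C=\frac{b}{1-\gamma}$ the bound would degenerate and the sum could diverge, which is why this threshold reappears throughout Section \ref{sec:rigour}.
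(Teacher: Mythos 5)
Your proof is correct and follows essentially the same route as the paper's: fix $\varepsilon$ with $C+\varepsilon<\tfrac{b}{1-\gamma}$, use $m_n^\gamma\mu_n\to C$ to get $\mu_n\le(C+\varepsilon)m_n^{-\gamma}$ for large $n$, bound $\mu_n k\le(C+\varepsilon)k^{1-\gamma}$ for $k\le m_n$, and then dominate the sum by a convergent series (the paper compares to $\int_0^\infty x^i e^{-(b/(1-\gamma)-C-\varepsilon)x^{1-\gamma}}\,dx$, which is the same estimate). One small remark on the exposition: you first write that $k\le m_n\Rightarrow m_n^{-\gamma}\le k^{-\gamma}$ is ``false in general,'' but it is in fact true (since $x\mapsto x^{-\gamma}$ is decreasing for $\gamma>0$), and indeed your own later justification of $m_n^{-\gamma}k\le k^{1-\gamma}$ amounts to exactly this observation — so the detour was unnecessary, but it does not affect the correctness of the argument.
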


\begin{proof}
  From the asymptotic behaviour of $w(k)$ \eq{AsymWk} we know there exists a $C_0\in (0,\infty)$ such that $w(k)\leq C_0 e^{\frac{-b}{1-\gamma}k^{1-\gamma}}$ for all $k$. So,
  \begin{align}
    \sum_{k\leq m_n}k^i w(k) e^{\mu_n k} \leq C_0 \sum_{k\leq m_n}k^i e^{\frac{-b}{1-\gamma}k^{1-\gamma}+\mu_n k}.
  \end{align}
  Fix $0 \leq \epsilon <(\frac{b}{1-\gamma} - C)$. Then there exists $\bar{n}\in\N$ such that $\mu_n\leq (C+\epsilon)m_n^{-\gamma}$ for $n\geq\bar{n}$.
 It follows that $\mu_n\leq (C+\epsilon)k^{-\gamma}$ for all $k\leq m_n$ and  $n\geq\bar{n}$.
 By applying this upper bound on $\mu_n$ we can bound the sum above uniformly in $n>\bar{n}$ as follows,
   \begin{align}
    \sum_{k\leq m_n}k^i w(k) e^{\mu_n k} &\leq C_0 \sum_{k\leq m_n}k^i e^{\frac{-b}{1-\gamma}k^{1-\gamma}+ (C+\epsilon)k^{1-\gamma}} \nonumber\\
    &\leq \int_{0}^{\infty}x^i e^{\frac{-b}{1-\gamma}x^{1-\gamma} + (C+\epsilon)x^{1-\gamma}} < \infty.
  \end{align}
\qed
\end{proof}

%\note{it follows from Lemma 1, but the sequence Lem 1 Corr 3 here is strange and we use it as a lemma in the main text. I think lem 1 and lem 2 in appendix is better.}
\begin{lemma}
  \label{convMoms}
  Under the assumptions of Lemma \ref{Taylor} we may bound the remainder in the Taylor expansion of each moment of the cut-off measure $\nu_{{\phi_n},{m_n}}$ introduced in (\ref{eq:CutMeas}). Writing $\phi_n=e^{\mu_n}$ we have
  \begin{align*}
    %\E{\nu_{{\phi_n},{m_n}}}{\eta_1^i} = \E{\nu_{\phi_c}}{\eta_1^i} + \partial_{\phi}\E{\nu_{\phi_c}}{\eta_1^2} +o(\mu_n).
    z_{m_n}(e^{\mu_n}) &= z(1) + z'(1) \mu_n  + o(\mu_n),\\
     R_{m_n}(e^{\mu_n}) &= \rho_c + \sigma_c^2\mu_n  + o(\mu_n),\\
     \textrm{and}\quad  \E{\nu_{{\mu_n},{m_n}}}{\eta_1^i}&= \E{\nu_{\mu_c}}{\eta_1^i} + \mu_n\partial_{\mu}\E{\nu_{\mu_c}}{\eta_1^i} +o(\mu_n).
\end{align*}
\end{lemma}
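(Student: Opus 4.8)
The plan is to expand every sum appearing in the cut-off moments around $\phi=1$ (equivalently $\mu=0$), showing that the finite cut-off $m_n$ contributes only a stretched-exponentially small error while the genuine correction comes from a first-order expansion of $e^{\mu_n k}$. Note that $\sum_k w(k)e^{\mu k}$ diverges for $\mu>0$, so one cannot send the cut-off to infinity first; instead one expands the finite sum in $\mu_n$ and controls the tail $k>m_n$ separately. Throughout set $M_i:=\E{\nu_1}{\eta_1^i}=\tfrac1{z(1)}\sum_{k\ge0}k^iw(k)$; every such moment is finite because $w$ decays like the stretched exponential \eq{AsymWk}, which beats all powers of $k$, and $M_1=\rho_c$, $M_2-M_1^2=\sigma_c^2$, $z'(1)=\sum_k kw(k)=z(1)\rho_c$.

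First I would handle the partition function by writing
\begin{align*}
  z_{m_n}(e^{\mu_n})=\sum_{k\le m_n}w(k)+\sum_{k\le m_n}w(k)\bigl(e^{\mu_n k}-1\bigr).
\end{align*}
Here $\sum_{k\le m_n}w(k)=z(1)-\sum_{k>m_n}w(k)$, and by \eq{AsymWk} the tail is of order $m_n^{\gamma}e^{-\frac{b}{1-\gamma}m_n^{1-\gamma}}$, which is $o(\mu_n)$ since $m_n^{\gamma}\mu_n\to C$. In the second sum, Taylor's formula $e^{\mu_n k}=1+\mu_n k+\tfrac12(\mu_n k)^2e^{\xi_{n,k}}$ with $0\le\xi_{n,k}\le\mu_n k$ gives a linear part $\mu_n\sum_{k\le m_n}kw(k)=\mu_n z'(1)+o(\mu_n)$ (using $\sum_{k>m_n}kw(k)\to0$, the tail of the convergent series $\sum_k kw(k)$) and a quadratic remainder bounded by $\tfrac12\mu_n^2\sum_{k\le m_n}k^2w(k)e^{\mu_n k}=O(\mu_n^2)=o(\mu_n)$, the sum being bounded by Lemma \ref{Taylor} with $i=2$ (applicable because $C<\tfrac{b}{1-\gamma}$). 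Hence $z_{m_n}(e^{\mu_n})=z(1)+z'(1)\mu_n+o(\mu_n)=z(1)(1+\rho_c\mu_n)+o(\mu_n)$.

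Next I would apply the identical splitting to the numerator $\sum_{k\le m_n}k^iw(k)e^{\mu_n k}$ of the $i$-th cut-off moment, now using Lemma \ref{Taylor} with index $i+2$ for the quadratic remainder and the convergence of $\sum_k k^{i+1}w(k)$ for the cut-off tail, obtaining $\sum_{k\le m_n}k^iw(k)e^{\mu_n k}=z(1)\bigl(M_i+M_{i+1}\mu_n\bigr)+o(\mu_n)$. Dividing by the expansion of $z_{m_n}(e^{\mu_n})$ just derived, with $(1+\rho_c\mu_n)^{-1}=1-\rho_c\mu_n+O(\mu_n^2)$, yields
\begin{align*}
  \E{\nu_{\mu_n,m_n}}{\eta_1^i}=M_i+\bigl(M_{i+1}-M_iM_1\bigr)\mu_n+o(\mu_n).
\end{align*}
The case $i=1$ is precisely the stated expansion of $R_{m_n}$, since $M_1=\rho_c$ and $M_2-M_1^2=\sigma_c^2$. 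For the general statement it then suffices to recognise the coefficient: differentiating the quotient $\E{\nu_\mu}{\eta_1^i}=\bigl(\sum_k k^iw(k)e^{\mu k}\bigr)/\bigl(\sum_k w(k)e^{\mu k}\bigr)$ — legitimate for $\mu\le0$ because every $M_i$ is finite — gives the covariance identity $\partial_\mu\E{\nu_\mu}{\eta_1^i}=\E{\nu_\mu}{\eta_1^{i+1}}-\E{\nu_\mu}{\eta_1^i}\E{\nu_\mu}{\eta_1}$, which at $\mu=\mu_c=0$ equals $M_{i+1}-M_iM_1$.

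The hard part is keeping both the beyond-cut-off tail and the quadratic Taylor remainder of order $o(\mu_n)$ uniformly in $n$; the former is immediate from the stretched-exponential decay \eq{AsymWk}, and the latter — where $e^{\mu_n k}$ may be as large as $e^{\mu_n m_n}\to\infty$ near the cut-off — is exactly the uniform bound provided by Lemma \ref{Taylor}, so this obstacle has already been removed there. The remaining work is routine expansion and division, and goes through verbatim for every $i\in\N_0$.
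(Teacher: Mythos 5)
Your proof is correct and takes essentially the same route as the paper's: Taylor expansion in $\mu$ around $0$ with the second-order remainder controlled uniformly in $n$ by Lemma \ref{Taylor} (you bound it term by term via $e^{\xi_{n,k}}\le e^{\mu_n k}$, the paper via the Lagrange form and monotonicity of $f_{m_n}^{(2)}$ — the same estimate), together with a stretched-exponential tail bound showing $f_{m_n}(1)\to f_\infty(1)$ fast enough. The only addition is that you carry out the quotient expansion explicitly to recover the covariance coefficient $M_{i+1}-M_iM_1$, which the paper leaves implicit; this is a harmless elaboration rather than a different method.
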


\begin{proof}
  For each $n\in\N$ the $i^{\textrm{th}}$ moment of the cut-off ensemble $\nu_{\phi_n,m_n}$ is given by a finite sum.
  We let,
  \begin{align}
    f_{m_n}(\mu) = \sum_{k\leq m_n}k^i w(k) e^{\mu k}.
  \end{align}
  then $f_{m_n}\in C^\infty(\R_{+},\R_{+})$.
  %For convenience of notation we consider $f_{m_n}\circ e$.
  By Taylors theorem expanding around $\mu=0$,
  \begin{align*}
    f_{m_n}(\mu_n) &= f_{m_n}(0) + f_{m_n}'(0) \mu_n + h_n(\mu_n),
  \end{align*}
  where the remainder term may be expressed as
  \begin{align*}
    h_n(\mu_n) = \frac{f_{m_n}^{(2)}(s)}{2}\mu_n^2 \quad \textrm{for some} \quad s\in[0,\mu_n].
  \end{align*}
  $f_{m_n}^{(2)}$ is non-negative and increasing and so the remainder is bounded by
  \begin{align*}
    0\leq h_n(\mu_n) \leq \frac{f_{m_n}^{(2)}(\mu_n)}{2}\mu_n^2.
  \end{align*}
  By Lemma \ref{Taylor}, we know that there exists $U>0$ such that $f_{m_n}^{(2)}(e^{\mu_n})\leq U$ for all $n\in\N$.
  It follows that,
  \begin{align*}
     f_{m_n}(e^{\mu_n}) = f_{m_n}(1) + f_{m_n}'(1) \mu_n +o(\mu_n)
  \end{align*}
  %Let $f=\E{\nu_{\phi_c}}{\eta_1^i}$.
  The result then follows by considering $\left|f_{\infty}(1)-f_{m_n}(1)\right| \sim \sum_{k>m_n}k^i e^{\frac{-b}{1-\gamma}k^{1-\gamma}+\mu_n k}$ which is $o(\mu_n)$ since $\mu_n m_n^{\gamma}\to C$, so the tail of the sum converges as a stretched exponential, this allows us to replace the expected values under the cut-off with $\mu=0$ with the expected values under the true critical measure.
\qed
\end{proof}

% BibTeX users please use one of
%\bibliographystyle{spbasic}      % basic style, author-year citations
%\bibliographystyle{spmpsci}      % mathematics and physical sciences
%\bibliographystyle{spphys}       % APS-like style for physics
%\bibliography{../../../../papers/general}   % name your BibTeX data base

% Non-BibTeX users please use

\end{document}